\documentclass[11pt]{article}
\usepackage{graphicx} 
\usepackage{a4wide}
\usepackage{amsmath}
\usepackage{url}
\usepackage{epsfig}
\usepackage{psfrag}
\usepackage{amssymb}
\usepackage{wrapfig}
\usepackage{algorithmicx}
\usepackage{algpseudocode}
\usepackage{paralist}
\usepackage{xcolor}
\usepackage{xfrac}
\usepackage{comment}
\usepackage{multicol}
\usepackage[margin=1in]{geometry}
\usepackage{authblk}
\usepackage{tikz}
\usepackage{subcaption}
\usetikzlibrary{decorations.pathmorphing, shapes, arrows.meta, positioning, svg.path}

\newcommand{\floor}[1]{\left\lfloor #1 \right\rfloor}

\newcommand{\donotshow}[1]{}

\newcommand{\ignore}[1]{}
\newcommand{\pnegskipvar}[1]{\vspace*{-\baselineskip}\vspace*{-#1 \belowdisplayskip}\par}

\newcommand{\pnegskip}{\vspace*{-\baselineskip}\vspace*{-\belowdisplayskip}\par}

\newcommand{\assign}{\mathbin{\raisebox{0.05ex}{\mbox{\rm :}}\!\!=}}

\providecommand{\CC}{C\raisebox{.08ex}{\hbox{\tt ++}}}

\newcommand{\N}{\mathbb{N}}

\newcommand {\abs}[1] {| #1 |}

\newlength{\mysetspacing}
\providecommand{\sset}[1]{\{\hspace{\mysetspacing} #1 \hspace{\mysetspacing}\}}

\newcommand{\mbegin}{\{\ \ }
\newcommand{\mend}{\}}

\newlength{\mleftindent}
\newlength{\mindent}
\newlength{\mboxwidth}
\newcommand{\mincrement}{\addtolength{\mboxwidth}{-\mindent}}
\newcommand{\mdecrement}{\addtolength{\mboxwidth}{\mindent}}

\newlength{\preprogramskip}
\newlength{\postprogramskip}
\newlength{\mexpwidth}
\newlength{\mexpindent}
\newcommand{\indentafterkeyword}{\hspace*{0.5em}}

\newcommand{\mslifelse}[3]  
{\setlength{\mexpwidth}{\mboxwidth}%
\settowidth{\mexpindent}{{\bf if\indentafterkeyword}}%
\addtolength{\mexpwidth}{-\mexpindent}%
{\bf if\indentafterkeyword}\parbox[t]{\mexpwidth}{#1}\\
\mincrement \mbegin \parbox[t]{\mboxwidth}{#2 \mend} \mdecrement \\
{\bf else} \\
\mincrement \mbegin \parbox[t]{\mboxwidth}{#3}\\
\mend \mdecrement
}

{\vspace{-0.3em}\begin{itemize}
\setlength{\itemsep}{0.2\itemsep}%
\setlength{\parskip}{0.2\parskip}%
\setlength{\topsep}{0.0\topsep}%
}
{\end{itemize}}

{\begin{itemize}
\setlength{\itemsep}{0.2\itemsep}%
\setlength{\parskip}{0.2\parskip}%
\setlength{\topsep}{0.0\topsep}%
}
{\end{itemize}}

{\begin{itemize}
\setlength{\itemsep}{1.5\itemsep}%
\setlength{\parskip}{1.5\parskip}%
\setlength{\topsep}{0.0\topsep}%
}
{\end{itemize}}

{\begin{itemize}
\setlength{\itemsep}{1.5\itemsep}%
\setlength{\parskip}{1.5\parskip}%
\par \smallskip
}
{\end{itemize}}

{\begin{itemize}
\setlength{\itemsep}{1.5\itemsep}%
\setlength{\parskip}{1.5\parskip}%
\par \medskip
}
{\end{itemize}}

{\vspace{-0.3em}\begin{description}
\setlength{\itemsep}{0.2\itemsep}%
\setlength{\parskip}{0.2\parskip}%
\setlength{\topsep}{0.0\topsep}%
}
{\end{description}}

{\begin{description}
\setlength{\itemsep}{0.2\itemsep}%
\setlength{\parskip}{0.2\parskip}%
\setlength{\topsep}{0.0\topsep}%
}
{\end{description}}

{\vspace{-0.3em}\begin{enumerate}
\setlength{\itemsep}{0.2\itemsep}%
\setlength{\parskip}{0.2\parskip}%
\setlength{\topsep}{0.0\topsep}%
}
{\end{enumerate}}

{\begin{enumerate}
\setlength{\itemsep}{0.2\itemsep}%
\setlength{\parskip}{0.2\parskip}%
\setlength{\topsep}{0.0\topsep}%
}
{\end{enumerate}}

\newlength{\proofpostskipamount}\newlength{\proofpreskipamount}
\newenvironment{proof}%
               {\par\vspace{\proofpreskipamount}\noindent{\bf Proof:}\hspace{0.5em}}
               {\nopagebreak%
                \strut\nopagebreak%
    \hspace{\fill}\qed\par\vspace{\proofpostskipamount}\noindent}
\newenvironment{proofnoqed}%
               {\par\vspace{\proofpreskipamount}\noindent{\bf Proof:}\hspace{0.5em}}
               {\nopagebreak%
                \strut\nopagebreak%
                \hspace{\fill}\par\vspace{\proofpostskipamount}\noindent}

               {\par\vspace{0.5ex}\noindent{\bf Proof #1:}\hspace{0.5em}}%
               {\nopagebreak%
                \strut\nopagebreak%
                \hspace{\fill}\qed\par\medskip\noindent}

\newtheorem{lemma}{Lemma}

\providecommand{\qed}{\rule[-0.2ex]{0.3em}{1.4ex}}

\newlength{\mydefwidth}
\newlength{\mytextwidth}

\newcommand{\myurl}[1]{{\footnotesize \url{#1}}}

\usepackage{mathptmx}
\usepackage[scaled]{helvet}   
\usepackage{courier}

\usepackage[T1]{fontenc}
\usepackage[utf8]{inputenc}

\newcommand{\lcp}{\mathit{lcp}}  \newcommand{\Lcp}{\mathit{Lcp}} 
\newcommand{\lcpodd}{\mathit{lcp}_{\mathit{odd}}}
\newcommand{\Lcpodd}{\mathit{Lcp}_{\mathit{odd}}}
\newcommand{\mate}{\mathit{mate}}
\newcommand{\sap}{\mathit{sap}}  \newcommand{\saps}{\mathit{saps}} 
\newcommand{\hide}[1]{}
\newcommand{\htmladdnormallink}[2]{#1}

\begin{document}

\title{Gabow's $O(\sqrt{n}m)$ Maximum Cardinality Matching Algorithm, Revisited}
\author[1]{Kurt Mehlhorn}
\author[2]{Romina Nobahari}
\affil[1]{Max Planck Institute for Informatics, Saarbr\"ucken, Germany}
\affil[2]{Sharif University, Tehran, Iran}
\maketitle

\begin{abstract} We revisit Gabow's $O(\sqrt{n} m)$ maximum cardinality matching algorithm (The Weighted Matching Approach to Maximum Cardinality Matching, 
Fundamenta Informaticae, 2017). It adapts the weighted matching algorithm of Gabow and Tarjan~\cite{GT91} to maximum cardinality matching. Gabow's algorithm works iteratively. In each iteration, it constructs a maximal number of edge-disjoint shortest augmenting paths with respect to the current matching and augments them. It is well-known that $O(\sqrt{n})$ iterations suffice. Each iteration consists of three parts. In the first part, the length of a shortest augmenting path is computed. In the second part, an auxiliary graph $H$ is constructed with the property that shortest augmenting paths in $G$ correspond to augmenting paths in $H$. In the third part, a maximal set of edge-disjoint augmenting paths in $H$ is determined, and the paths are lifted to and augmented to $G$. We give a new algorithm for the first part. Gabow's algorithm for the first part is derived from Edmonds' primal-dual algorithm for weighted matching. We believe that our approach is more direct and will be easier to teach. We have implemented the algorithm; the implementation is available at the companion webpage (\url{https://people.mpi-inf.mpg.de/~mehlhorn/CompanionPageGenMatchingImplementation.html}).
\end{abstract}

\tableofcontents

\section{Introduction}

The maximum matching problem is one of the basic problems in graph theory and graph algorithms. Given an undirected graph, the goal is to find a matching, i.e., a set of edges no two of which share an endpoint, of maximum cardinality. Edmonds gave a polynomial time algorithm as early as 1965~\cite{Edmonds:matching}. The running time of the algorithm was improved over time, culminating in the $O(nm\alpha(n))$ algorithm of Gabow~\cite{Gabow:edmonds} and the $O(nm)$ algorithm of Gabow and Tarjan~\cite{Gabow-Tarjan:union-find}. An implementation of the former algorithm is available in LEDA~\cite{LEDAsystem,LEDAbook}. Kececioglu and Pecqueur~\cite{Kececioglu:matching} give heuristic improvements that often lead to considerably smaller running times.  

Algorithms with running time $O(\sqrt{n} m)$ were given in~\cite{MV80,Vazirani94,Vazirani12,Vazirani20,Vazirani24,Goldberg-Karzanov,GT91,Gabow:GeneralMatching}. Mattingly and Ritchey~\cite{Mattingly-Ritchey} and Huang and Stein~\cite{Huang-Stein} discuss implementations of the Micali-Vazirani algorithm, and Ansaripour, Danaei, and Mehlhorn~\cite{GabowImplementation}  give an implementation of Gabow's algorithm. 

Gabow's $O(\sqrt{n}m)$ algorithm works iteratively, and so do the other $O(\sqrt{n}m)$ algorithms. In each iteration, it constructs a maximal number of edge-disjoint shortest augmenting paths with respect to the current matching and augments them. It is well-known that $O(\sqrt{n})$ iterations suffice. Each iteration consists of three parts. The first part determines the length of a shortest augmenting path. The second part constructs an auxiliary graph $H$ with the property that shortest augmenting paths in $G$ correspond to augmenting paths in $H$. The third part determines a maximal set of edge-disjoint augmenting paths in $H$, lifts the paths to $G$, and finally augments them to $G$. Gabow's algorithm is derived from the algorithm for maximum weight matching in~\cite{GT91}. The latter algorithm also works iteratively, with each iteration having three parts. In the first part, the weight of a minimum weight augmenting path is determined. For the cardinality matching problem, the computation of a minimum weight augmenting path (= shortest augmenting path) considerably simplifies, as Gabow shows. The other two parts of the algorithm in~\cite{Gabow:GeneralMatching} are as in~\cite{GT91}. In~\cite{Gabow:GeneralMatching}, the correctness proof for part three is more detailed. 

\emph{We give a new algorithm for the first part}.  Gabow's algorithm for the first part is derived from Edmonds' primal-dual algorithm for weighted matching. We believe that our approach is more direct and will be easier to teach. 

In Section~\ref{part I},  we describe how to find a shortest augmenting path. This section is our main contribution. In Section~\ref{part II}, we describe the construction of $H$ and the search for a maximal set of augmenting paths and their lifting to $G$. The construction of $H$ is an adaptation of Gabow's approach to our changes in the first part. The search for a maximal set of augmenting paths and their lifting to $G$ is as in~\cite{Gabow:GeneralMatching} and~\cite{GT91}. We include the algorithm for the third part and its proof of correctness for completeness and stress that there is no novelty in this part. In Section~\ref{Connection to Gabow}, we discuss the relationship to Gabow's approach for Part I. The implementation is available on the companion webpage: \url{https://people.mpi-inf.mpg.de/~mehlhorn/CompanionPageGenMatchingImplementation.html}

\section{The New Algorithm for Part I: Finding a Shortest Augmenting Path}\label{part I}

Let $G = (V,E)$ be the input graph, and let $M$ be the current matching. Initially, $M$ is empty. A \emph{blossom} in $G$ with respect to $M$ consists of a \emph{stem} and a \emph{cycle}. The cycle is an alternating path of odd length with the first and the last edge non-matching. The common endpoint of the first and the last edge is called the \emph{base} of the blossom. The stem is an alternating path of even length starting at a free vertex and ending in the base. Stem and cycle are simple paths and vertex-disjoint except for the base. We refer to the cycle as the \emph{blossom cycle} of the blossom. Contraction of the blossom cycle into the base yields a contracted graph $G/C$. It is well-known~\cite{Edmonds:matching} that there is an augmenting path in $G$ iff there is an augmenting path in $G/C$. Blossoms can be nested in the following sense. Assume there is a blossom in $G/C$ whose blossom cycle involves the base of a previously contracted blossom. Contraction of the blossom then results in a blossom in which the previously contracted blossom is nested. A blossom is \emph{maximal} if it is not nested in a larger blossom. 

To find a shortest augmenting path with respect to $M$, we grow a collection $S$ of search structures, contractions of subgraphs of $G$, one for each free vertex of $G$. Each search structure is an alternating tree, i.e., along any root to leaf path, the edges alternate between matching and non-matching, the edge incident to the root being non-matching. We use “node” for the vertices of the search structures and “vertex” for the vertices of $G$. We frequently refer to a search structure as a tree. The growth is structured into phases $0$, $1$, \ldots; we use $\Delta$ for the number of the current phase. In phase zero, we initialize $S$ with the free vertices. In later phases, we grow alternating trees rooted at the free nodes. Each node of $S$ is labeled \emph{even} or \emph{odd}, depending on the parity of its depth; roots have depth zero and are therefore even. The nodes of $S$ correspond to sets of vertices of $G$ with odd cardinality. Odd nodes represent single vertices, whereas even nodes may correspond to larger subsets and then have internal structure. Even nodes are contractions of blossom cycles, maybe nested. We refer to an even node of $S$ as a \emph{blossom}. Even nodes of cardinality one are trivial blossoms. Consider a tree with free vertex\footnote{The root of the tree is a blossom containing the free vertex.} $f$ and a vertex $v$ inside an even node of the tree. Then there will be an even-length alternating path, maybe several, in $S$ from $v$ to $f$. For this reason, the nodes inside a blossom are also called even. For the odd nodes of $S$, there will be an odd-length alternating path, maybe several, in $S$ from $v$ to $f$, but no even-length alternating path. Among these paths, we will single out a shortest and call it the \emph{canonical path} of the node. The concrete definition of canonical paths will be given in the next section.

For a matching edge, either both endpoints belong to $S$ or none does. Vertices not belonging to $S$ are \emph{unlabeled}. The unlabeled vertices come in pairs of vertices matched to each other. For a matched vertex $v$, let $\mate(v)$ denote its matching partner. When a vertex is added to $S$, it is added either as an even or as an odd node. We also say, it is born even or odd. A node born odd may become even at a later time, namely when it becomes part of a non-trivial blossom. A node born even stays even. 

We compute two values for each vertex of $G$. For a vertex $v$, 
$\lcp(v)$, the \emph{length of the canonical path}, is (intended to be) the length of a shortest even length alternating path in $S$ from a free node to $v$ (if any), and, for a node born odd, $\lcpodd(v)$ is (intended to be) the length of a shortest odd-length alternating path in $S$ from a free node to $v$. In Lemma~\ref{correctness}, we will prove that $\lcp(v)$ and $\lcpodd(v)$ actually have their intended meaning. The level of a node is either $\lcp(v)$ or $\lcpodd(v)$ depending on whether it is born even or odd. Free nodes are born even and have level zero.

\subsection{The Growth of the Search Structures}

We use two operations for growing the search structures: \emph{Growth-steps} and \emph{bridge-steps}. Growth-steps are only performed in even phases. Recall that we use $\Delta$ to number the phases. For even $\Delta$, we perform growth-steps at even nodes with $\lcp$-value $\Delta - 2$. For even and odd $\Delta$, we perform bridge-steps at non-matching edges $uv \in E$ with even endpoints and $\lcp$-sum $2\Delta - 2$, i.e., $\lcp(u) + \lcp(v) = 2\Delta - 2$. The choice of the parameters $\Delta - 2$ and $2\Delta - 2$ will become clear below. 

In the initialization step, we make each free vertex the root of a search structure. The $\lcp$-values of the free nodes are equal to zero. This concludes phase zero. 

\paragraph{Growth-Steps:} A growth-step adds two nodes to $S$. Recall that growth-steps only occur when $\Delta$ is even. The first growth-steps occur in phase two. Let $u$ be an even vertex with $\lcp(u) = \Delta - 2$, and let $e = uv$ be an edge with $v$ unlabeled. Then $e \not\in M$ and $\mate(v)$ is also unlabeled. We extend $S$ by making $v$ a child of the node containing $u$ and $\mate(v)$ a child of $v$ and set 
$\lcpodd(v) = \lcp(u) + 1$ and $\lcp(\mate(v)) = \lcp(u) + 2$. The canonical paths to $v$ and $\mate(v)$ are the canonical path to $u$ extended by the edge $uv$ and by the edges $uv$ and $v\mate(v)$, respectively.

\paragraph{Bridge-Steps:} We consider even-even non-matching edges $uv$ with $\lcp(u) + \lcp(v) = 2\Delta - 2$, where $u$ and $v$ do not belong to the same maximal blossom, i.e., belong to different maximal blossoms. A blossom is \emph{maximal} if it is not nested in a larger blossom. The first bridge-step may occur in phase one; this will be the case if there is an edge connecting two free vertices. 
\begin{description}
    \item[$u$ and $v$ belong to distinct trees:] We have found an augmenting path of length $\lcp(u) + \lcp(v) + 1 = 2\Delta - 1$, and Part I ends. 

    \item[$u$ and $v$ belong to the same tree:]  We have found a blossom. Let $B_u$ and $B_v$ be the blossoms containing $u$ and $v$, respectively, and let $B$ be the lowest (= furthest from the root) common ancestor of $B_u$ and $B_v$ in $S$. If one of $B_u$ or $B_v$ is an ancestor of the other, then $B$ is equal to this node and hence even. Otherwise, $B_u$ and $B_v$ are in different subtrees with respect to $B$, and hence $B$ is an even node, as odd nodes have only a single child. Note that, the child of an odd node is a blossom containing the mate of the odd node. We add the edge $uv$ to $S$. The edge $uv$ together with the paths from $B_u$ and $B_v$ to $B$ forms a cycle. All odd nodes on both paths become even. 

    The \emph{newly formed blossom} comprises the following nodes: All nodes contained in any of the blossoms on both paths plus the odd nodes on both paths. Note that the even nodes in a blossom may be blossoms shrunken earlier. A blossom comprises an odd number of vertices of $G$. This is obvious for trivial blossoms. A non-trivial blossom is formed by an odd-length cycle, and any node of the cycle stands for an odd number of vertices of $G$; the sum of an odd number of odd numbers is odd. For all but one vertex of a blossom, the mate also belongs to the blossom. The single vertex that is not mated inside the blossom is the base of the blossom; it is either a free vertex or has its mate outside the blossom. We shrink the blossom into a supernode. The supernode has an even level in $S$, the level of its base.
    
Consider an odd node $z$ on the path from $B_v$ to $B$. The canonical path to $z$ consists of the canonical path to $u$ followed by the edge $uv$ followed by the reversal of a suffix of the canonical path to $v$. It is the suffix starting in $z$. Its length is 
\[ \lcp(z) = \lcp(u) + 1 + \lcp(v) - \lcpodd(z), \]
see Figure~\ref{lcp for odd nodes}. For an odd node on the path from $B_u$ to $B$, the analogous statement holds. The canonical paths to all nodes in a blossom pass through the base of the blossom. The length of the suffix of the canonical path of $z$ starting in the base $b$ of the blossom is $\lcp(z) - \lcp(b)$. 
\end{description}

For future reference, we summarize the discussion in a Lemma. 

\begin{lemma}\label{structure of saps} Consider a blossom $B$ with base $b$ and a node $v$ of $B$. Then $b$ lies on the canonical path of $v$, and the suffix of the canonical path starting in $b$ and ending in $v$ has length $\lcp(v) - \lcp(b)$. \end{lemma}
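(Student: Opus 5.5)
We argue by induction over the sequence of growth-steps and bridge-steps, i.e., over the order in which blossoms are formed. The invariant has two parts. First, for every vertex $v$ of $S$, the canonical path of $v$ is an alternating path from the free vertex of its tree to $v$. Its length is $\lcp(v)$ if $v$ is even and $\lcpodd(v)$ if $v$ is odd. Second, for every maximal blossom $B$ with base $b$ and every $v \in B$, the canonical path of $v$ passes through $b$, and the part from $b$ to $v$ has length $\lcp(v) - \lcp(b)$. Once the first part holds, the second needs only one thing: the canonical path of $v$ has the canonical path of $b$ as a prefix. The suffix length is then automatically $\lcp(v) - \lcp(b)$. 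So we really prove the prefix property.

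\emph{Base case and growth-steps.} A trivial blossom $\{v\}$ has base $v$, and the claim holds with a suffix of length $0$. A growth-step creates the odd vertex $v$ and the trivial blossom $\{\mate(v)\}$. The canonical paths of the existing vertices are unchanged, and no existing blossom changes. Hence the invariant is preserved.

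\emph{Bridge-steps.} Let a bridge $uv$ form a new blossom with lowest common ancestor $B$, whose base $b$ becomes the base of the new blossom. There are three kinds of vertices $z$ in the new blossom, and we treat them in turn.

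(i) Suppose $z$ lies in an old blossom $B'$ on one of the two tree paths, with base $b'$. By induction, the canonical path of $z$ contains the canonical path of $b'$ as a prefix. The base $b'$ lies in the subtree below $B$, or $B' = B$. Its canonical path runs along tree edges through ancestors back to the root. Here we need a secondary invariant: for a vertex $x$ that is the base of a blossom or an odd node, the canonical path of $x$ consists of the canonical path of the base of the parent blossom, extended by tree edges. This follows from the growth-step definition, and it is preserved because bridge-steps never change the canonical paths of existing even vertices. Iterating this up the tree shows that the canonical path of $b$ is a prefix of that of $b'$, and hence of that of $z$.

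(ii) Suppose $z$ is an odd node on the path from $B_v$ to $B$. Its new canonical path is the canonical path of $u$, then the edge $uv$, then the reversed suffix of the canonical path of $v$ from $z$. The vertex $u$ lies in the subtree of $B$, so by case (i) the canonical path of $u$ starts with the canonical path of $b$. Hence so does the new path of $z$. Its length is $\lcp(u)+1+\lcp(v)-\lcpodd(z) = \lcp(z)$, which is exactly the formula used in the bridge-step. The odd nodes on the path from $B_u$ to $B$ are handled symmetrically.

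(iii) Suppose $z$ is a vertex whose blossom is not involved in the bridge-step. Its canonical path and its blossom are unchanged.

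\emph{Main obstacle.} The delicate point is the assertion in case (ii) that the concatenated path is a legitimate alternating path through $b$ with the claimed length. This requires that the suffix of the canonical path of $v$ starting at $z$ stays within the subtree below $B$, so that it does not revisit the prefix through $b$. It also requires that the prefix-closure of canonical paths survives later nestings, since an odd $z$ later sits inside a larger blossom whose base is farther up. I would handle this by making the secondary invariant explicit: canonical paths of vertices in the subtree of a blossom base $b$ extend the canonical path of $b$. This invariant is checked once for growth-steps and once for bridge-steps, and the lemma then follows directly from it.
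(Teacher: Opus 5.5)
Your proof is correct and is essentially the paper's argument. The paper derives the lemma directly from the recursive definition of canonical paths: growth-steps extend the path of the parent, and a bridge $uv$ gives a newly even $z$ the path of $u$, the edge $uv$, and the reversed suffix of the path of $v$ from $z$, so everything passes through the base. Your induction over steps with the prefix invariant makes that bookkeeping explicit.

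One remark on your main obstacle: simplicity of the canonical path is not needed for the statement, since the prefix property together with the fact that the canonical path of an even vertex $x$ has length $\lcp(x)$ already yields both claims. If you do want simplicity, the prefix invariant alone does not give it; you would additionally need a locality invariant saying that the part of the canonical path of $v$ after $b$ stays inside $B$.
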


\begin{figure}[t]
\begin{center}
\begin{tikzpicture}[
  every node/.style={circle, draw, inner sep=2pt},
  squig/.style={decorate, decoration={zigzag, segment length=4, amplitude=0.9}, thick},
  scale=0.7, transform shape
]

\node (a) at (0,7){$r$};
\node (b) at (0,5.5){$b$};
\node (c) at (-1,4.5){};
\node(d) at  (+1,4.5){};
\node (x) at (-1,2.0){$x$};
\node (z) at (+1,3.5){$z$};
\node (y) at (+1,2.0){$y$};

\draw[dashdotted] (a) -- (b);
\draw (b) -- (c);
\draw (b) -- (d);
\draw[dashdotted] (c) -- (x);
\draw[dashdotted] (d) -- (z);
\draw[dashdotted] (z) -- (y);
\draw (x) -- (y);
\end{tikzpicture}
\end{center}
\caption{\label{lcp for odd nodes}
The vertex $z$ becomes even by the addition of the bridge $xy$. The path from $r$ to $x$ has length $\lcp(x)$, the path from $r$ to $y$ has length $\lcp(y)$, the path from $r$ to $z$ has length $\lcpodd(z)$, the path from $y$ to $z$ has length $\lcp(y) - \lcpodd(z)$; therefore, $\lcp(z) \assign \lcp(x) + 1 + \lcp(y) - \lcpodd(z)$. Moreover, the even length path from $b$ to $z$ has length $\lcp(z) - \lcp(b)$. Edges are drawn solid, and paths are drawn dash-dotted.}
\end{figure}

\begin{figure}[t]
\input{Figure1}
 \caption{\label{Example}  \textbf{The execution of part I on the example given in~\cite{Gabow:GeneralMatching}}: (a) Shows the input graph. (b) At the end of phase 0, $S$ consists of the two free nodes $a$ and $l$. (c) At the end of phase 2, we have grown $S$ to level 2. (d) After the growth-step in phase 4, $S$ has grown to level 4. There is an even-even edge $ce$ with $\lcp(c) + \lcp(e) = 2 + 4 = 2\cdot 4 - 2$. (e) After the bridge-steps in phase 4, $d$ is now even with $\lcp(d) = 4$. (f) In phase 5, we add the bridge $ik$ with $\lcp(i)+ \lcp(k) = 4 + 4 = 2\cdot 5 - 2$. The vertices $h$ and $j$ become even with $\lcp$-value equal to $6$. (g) In phase 6, we grow the tree and add node $r$ at level 5 and node $q$ at level 6. Then we process the bridge $pq$ with $\lcp(p) + \lcp(q) = 4 + 6 = 2\cdot6 - 2$, $r$ becomes even with $\lcp(r) = 6$. We also add the bridge $eh$ with $\lcp(e) + \lcp(h) = 4 + 6 = 2\cdot 6 - 2$. Vertices $b$ and $f$ become even with $\lcp$-value equal to $10$. (h) Finally, in phase 7, we add the bridge $fn$ with $\lcp(f) + \lcp(n) = 10 + 2 = 2\cdot 7 - 2$. We have found an augmenting path of length 13, and Part I ends. (j) The contracted graph: Vertices $a$ to $k$ are contracted into the red supernode, and vertices $p$ to $r$ into the green supernode. Edges $mf$ and $pf$ are discarded. (i) also shows the structure inside the supernodes. In (d) we grow out of $c$ using edge $cd$ and set $\lcpodd(d) = 3$ and $\lcp(e) = 4$. When we consider $ce$, we set $\lcpodd(e) = 3$.  }
\end{figure}

\paragraph{Synchronization between Growth-Steps and Bridge-Steps:}

The parameters $\Delta - 2$ for growth-steps and $2\Delta - 2$ for bridge-steps are chosen for proper synchronization of growth- and bridge-steps. We present the intuition now and defer the detailed justification to the next section. Recall that we want to find a shortest augmenting path. We find an augmenting path when we encounter a bridge connecting two different trees. This suggests processing bridge-steps in order of increasing $\lcp$-sum. Similarly, we should grow the trees breadth-first, i.e., perform growth-steps out of nodes in increasing order of their $\lcp$-values. When we grow out of a node at level $\Delta - 2$, we create new even nodes at level $\Delta$ that may form bridges with existing nodes at level $\Delta - 2$ (not smaller by Lemma~\ref{bridges created by growth steps}) and hence we should perform bridge-steps with combined value $2\Delta -2$ not before growth steps out of nodes at level $\Delta - 2$. When we process a bridge with $\lcp$-sum $2\Delta - 2$, we may create new even nodes at level $\Delta$ (not smaller by Lemma~\ref{new lcp-values}) and hence we may perform growth-steps out of nodes with $\lcp$-value $\Delta$ after processing bridge-steps with $\lcp$-sum $2\Delta - 2$. So we use the order:
For even $\Delta$, growth-steps out of nodes with $\lcp$-value $\Delta - 2$. Then bridge-steps with $\lcp$-sum $2\Delta - 2$. Then bridge-steps with $\lcp$-sum $2\Delta$. Increase $\Delta$ by two and repeat.

We summarize: In phase zero, we initialize the search structure with the free nodes. Each free node becomes the root of an alternating tree. In phase $\Delta$, $\Delta \ge 1$, the following additions to the search structure take place:
\begin{description}
\item[Delta is even, growth-steps:] We grow out of nodes $v$ with $\lcp(v) = \Delta - 2$. As long as there is a node $v$ with $\lcp(v) = \Delta - 2$ with an unlabeled neighbor, say $x$, we add $x$ and $\mate(x)$ to $S$, $x$ at level $\Delta - 1$ with parent $v$, and $\mate(x)$ at level $\Delta$ with parent $x$.
\item[Delta is even or odd, bridge-steps:] We add even-even edges $xy$ with $x$ and $y$ belonging to different maximal blossoms and $\lcp(x) + \lcp(y) = 2\Delta - 2$. This may make some odd nodes even. They have $\lcp$-value larger than $\Delta$ (Lemma~\ref{new lcp-values}). If $x$ and $y$ belong to different trees, we have found an augmenting path; if they belong to the same tree, we have found a blossom. 
\end{description}
Figure~\ref{Example} illustrates an execution of the algorithm. The pseudo-code is shown in Figure~\ref{pseudocode}.

\begin{figure}[t]
\begin{algorithmic}[t!]
  \Function{Part1}{} \Comment{Returns True if an augmenting path exists and False otherwise}
  \State Initialize search structures to empty structures;
  \For{($\Delta = 0$, $\Delta = \Delta +1$, $\Delta \le n+1$)}   \Comment{Phase $\Delta$}
  \If{$\Delta$ is even}
  \State Grow the search structures to level $\Delta$ by either adding all free nodes ($\Delta = 0$)
  \State or for $\Delta \ge 2$ by growing out of nodes $v$ with $\lcp(v) = \Delta - 2$ as follows:
  \For{all vertices $v$ with $\lcp(v) = \Delta - 2$ and all edges $vx$}
  \If {$x$ is unlabeled}\smallskip
   \State \parbox{0.7\textwidth}{Make $x$ a child of $v$ and $\mate(x)$ a child of $x$, label $x$ odd and $\mate(x)$ even, and set
  $\lcpodd(x) \leftarrow \Delta - 1$ and $\lcp(\mate(x)) \leftarrow \Delta$;}
  \Else
  \If {$x$ is labeled even and $\lcp(v) = \Delta$}
  \State{ $\lcpodd(v) = \Delta - 1$; } \Comment{only needed for the correctness proof}
  \EndIf 
  \EndIf
  \EndFor
  \EndIf
  \While{$\exists$ even-even edge $xy$ with $\lcp(x) + \lcp(y) = 2\Delta - 2$ connecting\\ \hspace{2.1cm}different maximal blossoms}
  \If{$x$ and $y$ belong to the same search structure} \Comment{blossom}
  \State \begin{minipage}[t]{0.7\textwidth}Add the edge to the search structure; let $b$ be the base of the blossom formed. Make all odd nodes on the paths from $x$ and $y$ to $b$ even. For any odd node $z$ in the newly formed blossom,  set $\lcp(z) = \lcp(x) + 1 + \lcp(y) - \lcpodd(z)$. \end{minipage}
  \Else \Comment{$\sap$, set up $H$}
  \State Construct the contracted graph $H$;
  \State \textbf{return} true;
  \EndIf
  \EndWhile
 \EndFor
 \State \textbf{return} false;
 \EndFunction
\end{algorithmic}
\caption{\label{pseudocode}Part I of the Matching Algorithm. Augmenting paths are only found for $\Delta \le n/2$. The remaining phases are needed for the correct construction of the witness of optimality; see Section~\ref{Implementation}.}
\end{figure}

\subsection{Properties of Part I and Proof of Correctness}
In this section, we establish structural properties of the search graph $S$ and its evolution during the execution of the algorithm. We show that the discovery of vertices and bridges follows a well-defined order governed by the canonical path lengths ($\lcp$). In particular, growth- and bridge-steps processed in phase $\Delta$ only generate events for phases greater than or equal to $\Delta$. Finally, we prove that the canonical paths in $S$ correspond to shortest alternating paths, which implies the correctness of the algorithm.

\begin{lemma}[Even-odd edges go up at most one level]\label{even-odd edges go up at most one level} Let $uv$ be an edge with $u$ even and $v$ odd. Then $\lcpodd(v) \le \lcp(u) + 1$. \end{lemma}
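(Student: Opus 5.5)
We argue by considering the moment at which the inequality could first fail, and we use the phase structure of the algorithm. Here $v$ odd means $v$ was born odd and is still odd. In particular $v$ is not part of a non-trivial blossom. The value $\lcpodd(v)$ is set exactly once, namely by the growth-step that added $v$. At that moment $\lcpodd(v) = \Delta_v - 1$, where $\Delta_v$ is the (even) phase of that growth-step. The value $\lcp(u)$ of an even vertex is also set once: at birth for nodes born even, or when $u$ joins a blossom for nodes born odd. It never changes afterwards. So the claim is a statement about two fixed numbers, and it suffices to show $\lcp(u) \ge \Delta_v - 2$.

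\textbf{Case $\lcp(u) < \Delta_v - 2$.} We derive a contradiction as follows. Phases process $\lcp$-values in increasing order. In growth-steps, new even nodes get $\lcp = \Delta$. In bridge-steps of phase $\Delta$, new even nodes get $\lcp$-values larger than $\Delta$; this is the content of Lemma~\ref{new lcp-values}, which is referenced in the synchronization discussion. We may use it, or reprove it from the formula $\lcp(z) = \lcp(x)+1+\lcp(y)-\lcpodd(z)$ together with $\lcpodd(z) \le \lcp(y) - 1 < \Delta$. Consequently, a vertex with $\lcp(u) = \ell$ is even by the end of phase $\ell$, or is created even in phase $\ell$ before the next growth-steps. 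In particular, $u$ is even with $\lcp(u) = \ell$ at the start of phase $\ell + 2 \le \Delta_v - 2$. Here we use that $\ell$ is even, since canonical paths from free vertices to even vertices have even length; if needed we invoke the parity of $\lcp$.

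In phase $\ell+2$, the growth-steps scan all edges of all vertices with $\lcp = \ell$, including the edge $uv$. At that moment, $v$ is either unlabeled, in which case it gets added as a child of $u$ with $\lcpodd(v) = \ell+1$, contradicting $\Delta_v > \ell + 2$; or it is already labeled, in which case $\Delta_v \le \ell+2$, again a contradiction. Hence $\lcp(u) \ge \Delta_v - 2$, i.e.\ $\lcpodd(v) = \Delta_v - 1 \le \lcp(u)+1$.

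\textbf{Main obstacle.} The delicate step is the timing claim that $u$ is already even with its final $\lcp$-value when the growth-steps of phase $\lcp(u)+2$ are executed. This is clear for nodes born even. For nodes that became even through a bridge-step, it requires that the bridge-step assigning $\lcp(u)=\ell$ happens in a phase $\Delta' \le \ell$, i.e.\ at most in the phase preceding the growth out of level $\ell$. The bridge-step has $\lcp$-sum $2\Delta'-2$ and yields $\ell > \Delta'$ by Lemma~\ref{new lcp-values}. This gives $\Delta' < \ell$, hence $\Delta' \le \ell + 1$ suffices, and the growth-steps of phase $\ell+2$ come after all bridge-steps of phase $\ell+1$. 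If Lemma~\ref{new lcp-values} is not yet available at this point, we would prove its needed inequality inline by induction over the sequence of steps, proving both statements simultaneously.
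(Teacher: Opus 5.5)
Your proof is correct and follows the same idea as the paper's: look at the moment $uv$ is scanned during the growth-steps of phase $\lcp(u)+2$, by which point $v$ has been labeled and no $\lcpodd$-value larger than $\lcp(u)+1$ has been assigned. Your extra argument that $u$ is already even with its final $\lcp$-value at that time fills in a timing detail the paper leaves implicit, and it is not circular, since Lemma~\ref{new lcp-values} does not depend on the present lemma.

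Two small slips do not affect the argument. First, Lemma~\ref{new lcp-values} gives $\lcp \ge \Delta$, not $> \Delta$; this yields $\Delta' \le \ell$, which still suffices. Second, your optional inline reproof via $\lcpodd(z) \le \lcp(y)-1 < \Delta$ fails for non-horizontal bridges, where $\lcp(y) > \Delta$. The correct reason is that every $\lcpodd$-value existing in phase $\Delta$ was assigned by a growth-step of some phase $\le \Delta$, and hence is at most $\Delta - 1$.
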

\begin{proof} Consider, when the edge $uv$ is scanned out of even node $u$. The largest $\lcpodd$-value assigned up to and including this point is $\lcp(u) + 1$. Thus $\lcpodd(v) \le \lcp(u) + 1$. \end{proof}

\begin{lemma}[Discover bridges in order]\label{even-even in order} Assume, we process a bridge $xy$ and form a blossom. Let $v$ be an odd node that lies on the path from $y$ to the base of the blossom. Let $uv$ be a non-matching edge connecting $v$ to an even node $u$; $u$ and $v$ may be in different trees.  Then $v$ becomes even and $\lcp(x) + \lcp(y)\le \lcp(v) + \lcp (u)$. \end{lemma}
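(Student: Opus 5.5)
The first claim, that $v$ becomes even, is immediate from the bridge-step: every odd node on the path from $B_y$ to the base $b$ is made even when $xy$ is processed. The work is in the inequality. I will use the formula assigned at this bridge-step,
\[ \lcp(v) = \lcp(x) + 1 + \lcp(y) - \lcpodd(v). \]
Substituting it, the claimed inequality $\lcp(x)+\lcp(y) \le \lcp(v)+\lcp(u)$ becomes
\[ \lcp(x)+\lcp(y) \le \lcp(x)+\lcp(y)+1-\lcpodd(v)+\lcp(u), \]
which is equivalent to $\lcpodd(v) \le \lcp(u)+1$.

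This is exactly Lemma~\ref{even-odd edges go up at most one level}, applied to the edge $uv$ with $u$ even and $v$ odd. So the proof is this substitution followed by an appeal to that lemma.

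The one point to check is the interpretation of $\lcp(u)$ when $u$ becomes even only later. Lemma~\ref{even-odd edges go up at most one level} is stated for $u$ even and $v$ odd. Here $v$ is odd just before the bridge-step, so I apply the lemma at that moment, with $u$ already even.

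If $u$ becomes even later, its $\lcp$-value is set then. I would argue, via the synchronization of phases, that every $\lcp$-value created in phase $\Delta$ or later is at least $\Delta-2$. In fact it is at least $\Delta$, since nodes made even by a bridge get $\lcp$ larger than $\Delta$. On the other hand, $\lcpodd(v) \le \Delta - 1$, because every odd node is labeled in some phase $\le \Delta$ with $\lcpodd$ equal to that phase minus one. These two bounds give $\lcpodd(v) \le \lcp(u)+1$ in that case as well.

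The main obstacle is making this timing argument airtight. The cleanest route is to read Lemma~\ref{even-odd edges go up at most one level} as covering every even $u$, since $uv$ is scanned out of $u$ once $u$ is even. I would present that application as the main step and treat the time-of-becoming-even issue as a short side remark.
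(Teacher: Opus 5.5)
Your proposal is correct and matches the paper's proof. You substitute $\lcp(v)=\lcp(x)+1+\lcp(y)-\lcpodd(v)$, reduce the claim to $\lcpodd(v)\le\lcp(u)+1$, and invoke Lemma~\ref{even-odd edges go up at most one level}. The paper covers the case where $u$ becomes even only after the bridge step by noting that $v$ is labeled odd at the latest when $uv$ is scanned out of $u$; your phase-counting argument ($\lcp(u)\ge\Delta$ and $\lcpodd(v)\le\Delta-1$) is a valid alternative justification for that case.
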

\begin{proof} We have $\lcp(v) = \lcp(x) + 1 + \lcp(y) - \lcpodd(v)$, and hence $\lcp(v) + \lcpodd(v) = \lcp(x) + 1 + \lcp(y)$. Moreover, by Lemma~\ref{even-odd edges go up at most one level}, $\lcpodd(v) \le \lcp(u) + 1$, since $v$ is labeled odd at the latest when we scan the edge $uv$. Thus
  \[ \lcp(x) + 1 + \lcp(y) = \lcp(v) + \lcpodd(v) \le \lcp(v) + \lcp(u) + 1.  \]
  \pnegskipvar{1.6}
\end{proof}

\begin{lemma}[Bridge-steps generate growth-steps only for later phases] \label{new lcp-values} Let $xy$ be a bridge with $\lcp$-sum $\lcp(x) + \lcp(y) = 2 \Delta - 2$, and let $v$ be a node that becomes even by the addition of the bridge. Then $\lcp(v) \ge \Delta$. \end{lemma}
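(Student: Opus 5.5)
The plan is to use the explicit formula for the new $\lcp$-value of an odd node that becomes even. Suppose $v$ becomes even through the bridge $xy$. Then, by the rule in the bridge-step,
\[ \lcp(v) = \lcp(x) + 1 + \lcp(y) - \lcpodd(v) = 2\Delta - 1 - \lcpodd(v). \]
So $\lcp(v) \ge \Delta$ is equivalent to $\lcpodd(v) \le \Delta - 1$. The whole proof therefore reduces to bounding the odd level of every odd node that lies on the blossom cycle closed by $xy$.

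For this bound, assume without loss of generality that $v$ lies on the tree path from $B_y$ to the base $b$ of the new blossom. The nodes on this path that are odd lie strictly below $b$ and at or above $B_y$ in $S$. Odd levels increase strictly along tree paths away from the root, since a child odd node has a larger $\lcpodd$ than its grandparent odd node. Hence it suffices to show that every odd node $z$ that is an ancestor of $B_y$ (or the parent of $B_y$) has $\lcpodd(z) \le \lcp(y) - 1$. Then
\[ \lcp(v) = \lcp(x) + 1 + \lcp(y) - \lcpodd(v) \ge \lcp(x) + 2. \]
I would also combine this with the symmetric estimate obtained from the $x$-side.

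The cleaner route is as follows. The parent odd node $z$ of $B_y$ satisfies $\lcpodd(z) = \lcp(\mathrm{base}(B_y)) - 1 \le \lcp(y) - 1$, since $\lcp(y) \ge \lcp(\mathrm{base}(B_y))$ by Lemma~\ref{structure of saps}. Thus every odd node $v$ on the $y$-side has $\lcpodd(v) \le \lcp(y) - 1$, and symmetrically every odd node on the $x$-side has $\lcpodd(v) \le \lcp(x) - 1$. This gives $\lcp(v) \ge \lcp(x) + 2$ for $v$ on the $y$-side, and $\lcp(v) \ge \lcp(y) + 2$ for $v$ on the $x$-side. These bounds alone do not give $\Delta$ when $\lcp(x)$ and $\lcp(y)$ are unbalanced, so the key additional ingredient is a bound on $\lcpodd(v)$ in terms of $\Delta$ directly.

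The direct bound comes from the phase structure. Odd nodes are created only in growth-steps of even phases $\Delta' \le \Delta$, out of nodes with $\lcp = \Delta' - 2$, and they receive $\lcpodd = \Delta' - 1$. So every odd node existing in phase $\Delta$ has $\lcpodd(v) \le \Delta - 1$. For odd $\Delta$ the bound is in fact $\Delta - 2$, since the last growth phase was $\Delta - 1$. Substituting into the formula gives $\lcp(v) \ge 2\Delta - 1 - (\Delta - 1) = \Delta$, as claimed.

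The main obstacle is to justify that no odd node with $\lcpodd$ larger than $\Delta - 1$ exists at the moment the bridge is processed. This needs the invariant that all growth-steps of phase $\Delta$ (when $\Delta$ is even) precede its bridge-steps. It also needs the fact that bridge-steps never create odd nodes, but only turn odd nodes even. Both facts follow from the phase order in the pseudo-code, but they should be stated explicitly as the induction hypothesis over the phases.
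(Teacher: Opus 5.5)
Your proof is correct and, once the middle detour through the bounds $\lcp(v) \ge \lcp(x)+2$ and $\lcp(v) \ge \lcp(y)+2$ is dropped, it is exactly the paper's argument: odd nodes arise only in growth-steps, so every $\lcpodd$-value present in phase $\Delta$ is at most $\Delta - 1$, and substituting into $\lcp(v) = 2\Delta - 1 - \lcpodd(v)$ gives $\lcp(v) \ge \Delta$. The ordering fact you actually need is only that no growth-step of a later phase has been executed yet; whether phase $\Delta$'s own growth-steps precede its bridge-steps does not matter, since those growth-steps assign $\lcpodd$-value exactly $\Delta - 1$.
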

\begin{proof} Odd nodes are only added in growth-steps. In phase $\Delta$, all odd nodes have an $\lcpodd$-value of at most $\Delta - 1$. Therefore, 
  \[ \lcp(v) = \lcp(x) + 1 + \lcp(y) - \lcpodd(v) = 2\Delta - 1 - \lcpodd(v) \ge 2 \Delta - 1 - (\Delta - 1) = \Delta.\]  \pnegskipvar{1.6}
\end{proof}

The skewness of a bridge $xy$ is the magnitude (absolute value) of the difference of the $\lcp$-values of its endpoints, i.e. $\abs{\lcp(x) - \lcp(y)}$. We call a bridge \emph{horizontal} if its skewness is zero or two. Consider a horizontal bridge with $\lcp$-sum $2\Delta -2$. For odd $\Delta$, the $\lcp$-values of both endpoints are then $\Delta - 1$, for even $\Delta$, the $\lcp$-values of the endpoints are then $\Delta - 2$ and $\Delta$. Bridge-steps make odd nodes even. If both endpoints have $\lcp$-value $\Delta - 1$, the newly even nodes have $\lcp$-value at least $\Delta + 1$ and hence either do not participate in a bridge event in phase $\Delta$ or in a non-horizontal bridge event. If the endpoints have $\lcp$-values $\Delta - 2$ and $\Delta$, the newly even nodes have $\lcp$-value at least $\Delta$. Newly even nodes with $\lcp$-value $\Delta$ may be part of horizontal bridges in the same $\Delta$-phase. Figure~\ref{horizontal bridges}(b) shows an example.

\begin{figure}[t]
\begin{minipage}[t]{0.45\textwidth}
\begin{center}
\begin{tikzpicture}[
  every node/.style={circle, draw, inner sep=2pt},
  squig/.style={decorate, decoration={zigzag, segment length=4, amplitude=0.9}, thick},
  scale=0.7, transform shape
]
\node (h) at (-0.8,1.2) {h};
\node (a) at (-0.8,0) {a};
\node (b) at (1,0) {b};
\node (f) at (5,4.8){f};
\node (c) at (1,1.2) {c};
\node (g) at (5,6){g};
\node (d) at (3,2.4) {d};
\node (e) at (3,3.6){e};

\draw[dashed] (a) -- (b);

\draw[dashed] (d) -- (c);
\draw[dashed] (e) -- (f);
\draw[squig] (b) -- (c);
\draw[squig] (h) -- (a);
\draw[squig] (f) -- (g);
\draw[squig] (e) -- (d);

\node[draw=none, inner sep=0, anchor=west] at (-3.5,0) {$\Delta - 1$};
\node[draw=none, inner sep=0, anchor=west] at (-3.5,1.2) {$\Delta - 2$};
\node[draw=none, inner sep=0, anchor=west] at (-3.5,2.4) {$\Delta - 3$};
\node[draw=none, inner sep=0, anchor=west] at (-3.5,3.6) {$\Delta - 4$};
\node[draw=none, inner sep=0, anchor=west] at (-3.5,4.8) {$\Delta - 5$};
\node[draw=none, inner sep=0, anchor=west] at (-3.5,6) {$\Delta - 6$};
\end{tikzpicture}

\centering{(a)}
\end{center}

\end{minipage}\hfill
\begin{minipage}[t]{0.45\textwidth}
\centering
\begin{tikzpicture}[
  every node/.style={circle, draw, inner sep=1.2pt},
  squig/.style={decorate, decoration={zigzag, segment length=4, amplitude=0.9}, thick},
  scale=0.7, transform shape
]

\node[draw=none, inner sep=0, anchor=west] at (-3.5,0) {$\Delta$};
\node[draw=none, inner sep=0, anchor=west] at (-3.5,1.2) {$\Delta - 1$};
\node[draw=none, inner sep=0, anchor=west] at (-3.5,2.4) {$\Delta - 2$};
\node[draw=none, inner sep=0, anchor=west] at (-3.5,3.6) {$\Delta - 3$};
\node[draw=none, inner sep=0, anchor=west] at (-3.5,4.8) {$\Delta - 4$};
\node[draw=none, inner sep=0, anchor=west] at (-3.5,6) {$\Delta - 5$};

\node (b) at (-1.2,3.6) {b};
\node (c) at (0,3.6) {c};
\node (d) at (1.2,3.6){d};
\node (e) at (-1.2,2.4){e};
\node (f) at (0,2.4){f};
\node (g) at (1.2,2.4){g};
\node (h) at (0,1.2) {h};
\node (i) at (0, 0){i};
\node (j) at (2.4, 2.4) {j};
\node (k) at (2.4, 3.6) {k}; 
\node (l) at (3.6, 4.8){l};
\node (m) at (3.6, 6){m};

\draw(f) -- (h);
\draw[squig] (b) -- (e);
\draw[squig] (c) -- (f);
\draw[squig] (d) -- (g);
\draw[squig] (h) -- (i);
\draw[squig] (j) -- (k); 
\draw[squig] (l) -- (m); 
\draw[dashed] (e) -- (i);
\draw[dashed] (g) -- (h);
\draw[dashed] (h) -- (j);
\draw[dashed] (k) -- (l);
\end{tikzpicture}
\medskip

\centering{(b)}  
\end{minipage}
\caption{\label{horizontal bridges}\label{cascading bridges}\textbf{Cascading bridges}:
(a) $\Delta$ is assumed to be odd. At the beginning of phase $\Delta$ , $a$, $b$, $d$, and $f$ are even, and $c$, $e$, and $g$ are odd. Nodes $a$ and $b$ are guaranteed to have their correct $\lcp$-values by the first part of Lemma~\ref{correctness}. We add the bridge $ab$ with $\lcp$-sum $2\Delta - 2$; its skew is zero. As a consequence, $c$ becomes even with $\lcp$-value $\Delta + 1$ and the edge $cd$ becomes an even-even edge with $\lcp$-sum $2\Delta - 2$ and a skewness of 4. We add $cd$, $e$ becomes even with the $\lcp$-value $\Delta + 3$, and $ef$ becomes an even-even edge with $\lcp$-sum $2\Delta - 2$ and skew 8. Note that the edges $ab$, $cd$, and $ef$ have increasing skew.  \\ \protect 
(b) $\Delta$ is assumed to be even. In the growth-step in phase $\Delta$ we have added node $h$ at level $\Delta - 1$ and node $i$ at level $\Delta$. The edge $ei$ is a bridge with $\lcp$-sum $2\Delta - 2$ and skew 2.  The addition of $ei$ makes $h$ even with $\lcp(h) = \Delta$ and creates the bridges $hg$ and $hj$ with skew 2. Node $k$ becomes even with $\lcp(k) = \Delta + 2$. Then $kl$ has two even endpoints, $\lcp$-sum $2\Delta - 2$, and skew 6. So we add $kl$. 
}\end{figure}

\begin{lemma}[Non-horizontal bridges: Skewness increases]\label{skewness increases} Let $z$ be a node that becomes even because of the non-horizontal bridge event $xy$. Then $\lcp(z) > \max(\lcp(x),\lcp(y))$ and hence if $z$ becomes part of a bridge for the same phase, the bridge has larger skewness. \end{lemma}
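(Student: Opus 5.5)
We plan to use the explicit formula for the new $\lcp$-value. If $z$ becomes even because of the bridge $xy$, then $\lcp(z) = \lcp(x) + 1 + \lcp(y) - \lcpodd(z)$. Hence $\lcp(z) > \lcp(y)$ is equivalent to $\lcpodd(z) \le \lcp(x)$, and $\lcp(z) > \lcp(x)$ is equivalent to $\lcpodd(z) \le \lcp(y)$. Since $\lcpodd$-values are odd and $\lcp$-values are even, both inequalities may be sharpened by one whenever useful. By symmetry we assume that $z$ lies on the tree path from $B_y$ up to the base $b$ of the new blossom. We then prove the two inequalities separately. The final clause of the lemma is an immediate consequence: a new bridge $zw$ in the same phase has $\lcp$-sum $2\Delta - 2$ and $\lcp(z) > \max(\lcp(x),\lcp(y))$, so its skewness $\abs{\lcp(z) - \lcp(w)} = 2\lcp(z) - (2\Delta - 2)$ exceeds $\abs{\lcp(x) - \lcp(y)}$.

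\emph{The inequality on $z$'s own side.} We show $\lcpodd(z) < \lcp(y)$. The odd node $z$ is a proper ancestor of $B_y$ in $S$, and its child is a blossom $B'$ whose base is $\mate(z)$, with $\lcp(\mate(z)) = \lcpodd(z) + 1$. Levels do not decrease along root-to-leaf paths of $S$: a growth-step gives a child level parent $+1$ and parent $+2$, and blossom bases keep their levels. Moreover, by Lemma~\ref{structure of saps}, every vertex of a blossom has $\lcp$ at least that of the blossom's base. Together these give $\lcp(y) \ge \lcp(\mate(z)) > \lcpodd(z)$. This yields $\lcp(z) > \lcp(x)$.

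\emph{The inequality on the opposite side.} We must show $\lcpodd(z) \le \lcp(x)$, and this is where the hypothesis that the bridge is non-horizontal enters. By the proof of Lemma~\ref{new lcp-values}, every odd node present in phase $\Delta$ satisfies $\lcpodd(z) \le \Delta - 1$. If $\lcp(x) \ge \Delta - 1$, we are done. Otherwise, since the bridge is non-horizontal, its skewness is at least $4$, so $\lcp(x) \le \Delta - 3$ and $\lcp(y) \ge \Delta + 1$. In this case $y$ itself was made even by an earlier bridge, because growth-steps up to phase $\Delta$ create even nodes of level at most $\Delta$.

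For this remaining case, we plan an argument by contradiction using Lemma~\ref{even-even in order}. Suppose $\lcpodd(z) > \lcp(x)$. Then $\lcpodd(z) \ge \lcp(x) + 1$, so $z$ sits strictly below the level of $x$ in the tree. We would follow the canonical path of $y$, which by Lemma~\ref{structure of saps} passes through the bases of the blossoms containing $y$, and exhibit an even-even edge out of the region below $z$ whose $\lcp$-sum is smaller than $2\Delta - 2$. Such an edge would have been processed in an earlier phase, and would already have absorbed $z$ into a blossom, contradicting that $z$ is still odd. An alternative route to the same contradiction is to use the correctness lemma (Lemma~\ref{correctness}), which states that $\lcpodd(z)$ is the length of a shortest odd alternating path: we would build a short odd alternating path to $z$ through $x$ and the bridge. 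I expect this step, $\lcpodd(z) \le \lcp(x)$ when $x$ is the endpoint with the smaller level, to be the main obstacle.
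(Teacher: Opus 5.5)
The parts you complete are correct: the reduction to $\lcpodd(z) \le \lcp(x)$ and $\lcpodd(z) \le \lcp(y)$, the own-side bound $\lcpodd(z) < \lcp(y)$, the easy case $\lcp(x) \ge \Delta - 1$, and the derivation of the skewness clause. The gap is the remaining case: $\lcp(x) \le \Delta - 3$, $\lcp(y) \ge \Delta + 1$, with $z$ on $y$'s side. This is the only nontrivial part of the lemma, and you leave it as a plan, so the proposal does not yet prove the statement.

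Both of your plans also point in the wrong direction. The argument needs no bridge of smaller $\lcp$-sum. The ``alternative route'' fails as described: a path through $x$ and the bridge, then up the tree to $z$, enters $z$ through its matching edge $\mate(z)z$. So it has even length; it is exactly the new canonical path of length $\lcp(z)$, and it says nothing about $\lcpodd(z)$.

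The missing idea is a growth-step argument on the bridge edge $xy$ itself, namely the argument of Lemma~\ref{even-odd edges go up at most one level}. You already observed that $y$ was born odd.
\begin{itemize}
\item The edge $xy$ was scanned when growing out of $x$ in phase $\lcp(x) + 2 \le \Delta - 1$, and $y$ was labeled at the latest at that moment.
\item Since $y$ was labeled odd, this gives $\lcpodd(y) \le \lcp(x) + 1$.
\item $z$ is an odd proper ancestor of $y$'s birth position. By the same downward monotonicity of levels you used for the own-side bound, $\lcpodd(z) \le \lcpodd(y) - 2 \le \lcp(x) - 1$.
\item Hence $\lcp(z) = \lcp(x) + 1 + \lcp(y) - \lcpodd(z) \ge \lcp(y) + 2$.
\end{itemize}
This is the paper's argument. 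It also closes your contradiction setup directly. Assuming $\lcpodd(z) > \lcp(x)$ forces $\lcpodd(y) \ge \lcp(x) + 3$, so $y$ would still have been unlabeled when $xy$ was scanned from $x$, and that scan would then have labeled it.
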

\begin{proof} Assume w.l.o.g.~$\lcp(x) < \lcp(y)$. Since $xy$ is a non-horizontal bridge, we have $\lcp(x) + 4 \le \lcp(y)$. Together with the constraint $\lcp(x) + \lcp(y) = 2\Delta - 2$, it follows that $\lcp(x) \le \Delta - 3$ and $\lcp(y) \ge \Delta + 1$. Since nodes that are born in growth-steps, have $\lcp$-value at most $\Delta$, $y$ was born odd. 

  Let $b$ be the base of the blossom formed by $xy$. For odd nodes $z$ on the path from $x$ to $b$, we have $\lcp(z) \ge \lcp(y) + 2$ . Consider next an odd node $z$ on the path from $y$ to $b$. Since $y$ was born odd, we have $\lcpodd(y) \le \lcp(x) + 1$ (Lemma \ref{even-odd edges go up at most one level}), and since $z$ is a proper ancestor of $y$, we have $\lcpodd(z) \le \lcpodd(y) - 2$ and hence $\lcpodd(z) \le \lcp(x) - 1$. Thus 
\[ \lcp(z) = \lcp(x) + 1 + \lcp(y) - \lcpodd(z) > \lcp(y).\] \pnegskip
\end{proof}

\begin{lemma}[Growth-steps in phase $\Delta$ generate bridge-steps for phases $\ge \Delta$]\label{bridges created by growth steps} Let $x$ be an even node and consider a growth-step in phase $\Delta$ to unlabeled node $y$. The step adds $y$ and $\mate(y)$ to the search structure; $y$ becomes odd and $\mate(y)$ becomes even. Let $z$ be another even node in the search structure, and assume there is an edge connecting $z$ to $\mate(y)$. Then $\lcp(z) \ge \lcp(x)$. In particular, the $\lcp$-sum of the bridge $z\mate(y)$ is at least $2\Delta - 2$. \end{lemma}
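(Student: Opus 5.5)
The proof will be a short argument by contradiction from the breadth-first order in which growth-steps are performed. In phase $\Delta$ the growth-step out of $x$ has $\lcp(x) = \Delta - 2$. So it suffices to show that every even node $z$ adjacent to $\mate(y)$ satisfies $\lcp(z) \ge \Delta - 2$. Given this, the $\lcp$-sum of $z\mate(y)$ is $\lcp(z) + \lcp(\mate(y)) \ge (\Delta - 2) + \Delta = 2\Delta - 2$, since $\lcp(\mate(y)) = \lcp(x) + 2 = \Delta$.

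Suppose instead that $\lcp(z) < \lcp(x) = \Delta - 2$. Since all $\lcp$-values are even, $\lcp(z) \le \Delta - 4$. Consider the moment at which $z$ obtained its current $\lcp$-value. There are two possibilities.

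If $z$ was born even, this happened in a growth-step of phase $\lcp(z)$ (or in phase $0$ for a free vertex). Then $z$ was scanned in the growth-steps of phase $\lcp(z) + 2 \le \Delta - 2 < \Delta$.

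If $z$ became even through a bridge-step, then by Lemma~\ref{new lcp-values} this bridge-step occurred in a phase $\Delta'$ with $\lcp(z) \ge \Delta'$. Hence $z$ already had its value $\lcp(z)$ before the growth-steps of phase $\lcp(z) + 2 \le \Delta - 2$. The phase order (growth-steps at $\lcp$-value $\Delta - 2$ come after all bridge-steps of sum at most $2\Delta - 4$) ensures that nodes becoming even at level $\Delta'$ are grown from in phase $\Delta' + 2$ or later, but still in the phase matching their value.

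In either case, $z$ was scanned in a phase strictly earlier than $\Delta$. At that time, the edge from $z$ to $\mate(y)$ was examined. If $\mate(y)$ was unlabeled then, the algorithm would have added $\mate(y)$ as an odd child of $z$. This contradicts the fact that $y$ and $\mate(y)$ are both unlabeled at the start of the current growth-step, since labels are never removed. Hence $\lcp(z) \ge \lcp(x)$.

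The main obstacle is making precise the claim that every even node with $\lcp$-value $\ell$ already exists with that value when the growth-steps of phase $\ell + 2$ are run. This is an invariant: $\lcp$-values, once assigned, are not changed, and a node acquires value $\ell$ only in a phase at most $\ell$ (by Lemma~\ref{new lcp-values} for bridges, and by construction for growth-steps). I would state this monotonicity explicitly, perhaps as a short induction over phases, and then apply it to $z$. I also need to make sure that matching edges do not interfere. The edge $z\mate(y)$ is not a matching edge, because $\mate(y)$ is matched to $y$ and $z \ne y$, as $z$ is even while $y$ is odd. So this edge is genuinely scanned in the growth-step out of $z$.
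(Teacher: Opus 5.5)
Your proof is correct and is the paper's argument. Assume $\lcp(z) < \lcp(x) = \Delta - 2$. Then $z$ was grown from in an earlier phase, while $\mate(y)$ was still unlabeled, so $\mate(y)$ would already have been added as a child of $z$. Your extra justification via Lemma~\ref{new lcp-values}, that $z$ holds its final $\lcp$-value by phase $\lcp(z)+2$ even if it became even through a bridge, spells out a step the paper leaves implicit in ``we grew out of $z$ before growing out of $x$''.
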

\begin{proof} Since we grow out of $x$ in phase $\Delta$, $\lcp(x) = \Delta - 2$. For the sake of a contradiction, assume $\lcp(z) < \lcp(x)$. Then we grew out of $z$ before growing out of $x$. At this point, $\mate(y)$ and $y$ were unlabeled, and we would have added $\mate(y)$ as a child of $z$. So $\lcp(z) \ge \lcp(x)$.
Moreover, $\lcp(\mate(y)) = \lcp(x) + 2$ and hence $\lcp(\mate(y)) + \lcp(z) \ge \lcp(x) + 2 + \lcp(x) = 2\Delta - 2$.  \end{proof}

At this point, we know that the growth- and bridge-steps are well-ordered. Growth steps do not generate bridge-steps for an earlier phase; bridge-steps do not generate bridge-steps for an earlier phase and generate growth-steps only for a later phase. We next show that canonical paths are indeed shortest alternating paths.\smallskip

\newcommand{\SD}{S_\Delta}

For a node $v$, let $\Lcp(v)$ ($\Lcpodd(v)$) be the length of a shortest even-length (odd-length) alternating path in $G$ starting from a free vertex and ending in $v$; if no such path exists, the value is defined as $\infty$. In contrast, $\lcp(v)$ and $\lcpodd(v)$ are the corresponding values \emph{maintained by the algorithm}.

\begin{lemma}[Correctness]\label{correctness} For even $\Delta$, at the end of phase $\Delta$:  $S$ contains exactly the vertices that can be reached from a free vertex by an alternating path of length at most $\Delta$. Furthermore, $\lcp(v)$ is defined if $\Lcp(v) \le \Delta$, and $\lcpodd(v)$ is defined if $\Lcpodd(v) < \Lcp(v)\le \Delta$.

For any $\Delta$: Any non-matching edge $uv$ added in a bridge step satisfies $\Lcp(u) + \Lcp(v) = 2\Delta - 2$, and $\lcp(u) = \Lcp(u)$ and $\lcp(v) = \Lcp(v)$. If no augmenting path is found in the phase, all edges $uv$ with $\Lcp(u) + \Lcp(v) = 2\Delta - 2$ are added. Moreover, at any point in the execution, $\lcp(v) = \Lcp(v)$ whenever $\lcp(v)$ is defined, and $\lcpodd(v) = \Lcpodd(v)$ whenever $\lcpodd(v)$ is defined. 

The canonical path of an even vertex is a shortest alternating path from a free vertex ending in the vertex. When the algorithm terminates, it has found a $\sap$. 
\end{lemma}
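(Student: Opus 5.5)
We prove all claims simultaneously by induction over the sequence of steps, ordered by phase and, within a phase, in the order the algorithm executes them. The easy half is soundness. Every value the algorithm assigns is the length of an explicit alternating path from a free vertex: for growth-steps, the canonical path extended by one or two edges; for bridge-steps, the canonical path to $x$, the edge $xy$, and the reversed suffix of the canonical path of $y$, whose length is given by Lemma~\ref{structure of saps}. So $\lcp(v) \ge \Lcp(v)$ and $\lcpodd(v) \ge \Lcpodd(v)$ whenever the values are defined. One point needs care: the concatenated walk must be a simple alternating path. This follows from the tree structure of $S$, because the suffix from $z$ up to the base stays inside the newly formed blossom and is disjoint from the canonical path of $x$ below the base. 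The remaining work is the reverse inequalities together with the completeness statements.

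For completeness, take a shortest even alternating path $P$ from a free vertex $f$ to $v$, of length $\ell=\Lcp(v)$. We must show that $\lcp(v)$ is set by the end of phase $\ell$, and set to exactly $\ell$. Walk along $P$ and let $w$ be the last vertex of $P$ that receives its correct label no later than required (even $w$ with $\lcp(w)$ equal to its position on $P$, or odd $w$ with $\lcpodd(w)$ equal to its position). Look at the next edge of $P$.

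If it leaves an even $w$ toward an unlabeled vertex, then the growth-step in phase $\lcp(w)+2$ labels the next two vertices correctly, and Lemma~\ref{bridges created by growth steps} guarantees that nothing from an earlier phase was skipped. If instead the next vertex $u$ is already labeled, we split into two cases. When $u$ is even, $wu$ is an even-even edge with $\lcp$-sum at most $2\Delta'-2$ for the appropriate phase $\Delta'$. By the ordering Lemmas~\ref{new lcp-values}, \ref{skewness increases}, and \ref{even-even in order}, this bridge is processed in its phase, and processing it makes the odd vertices on the relevant tree paths even with $\lcp$-value $\lcp(w)+1+\lcp(u)-\lcpodd(\cdot)$. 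Comparing this value with the portion of $P$ yields $\lcp \le \Lcp$. When $u$ is odd, Lemma~\ref{even-odd edges go up at most one level} controls its level.

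The main obstacle is precisely this comparison. A shortest alternating path in a non-bipartite graph can enter blossoms and leave through non-base vertices, so prefixes of $P$ need not be shortest alternating paths to their endpoints. The standard fact $\Lcp(v)+\Lcpodd(v)=\Lcp(u)+\Lcp(w)+1$ for the minimizing bridge $uw$ (as in Micali–Vazirani) then has to be re-derived. I would attempt it by choosing, among all shortest paths to a vertex with a wrong or missing label, one that minimizes the phase at which it should have been labeled. I would then show that its last edge either is a tree edge, handled by the growth induction, or is a bridge with $\Lcp(u)+\Lcp(w)=2\Delta-2$, handled by the induction hypothesis that all such bridges are added in phase $\Delta$. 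This hypothesis is the second paragraph of the lemma.

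That the bridges added in phase $\Delta$ have $\lcp$-sum $2\Delta-2$ is immediate from the algorithm. Once $\lcp=\Lcp$ is established at the moment of addition, they also have $\Lcp$-sum $2\Delta-2$. Conversely, every edge with $\Lcp$-sum $2\Delta-2$ has both endpoints labeled by then, by the ordering lemmas, so it is processed unless an augmenting path is found first.

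Finally, an augmenting path found in phase $\Delta$ has length $2\Delta-1$. Any augmenting path is an odd path $f \ldots u\, v \ldots f'$ whose middle non-matching edge joins two even vertices, so its length is at least $\Lcp(u)+\Lcp(v)+1$. Since all even-even edges with smaller $\Lcp$-sum were processed earlier without connecting two different trees, no shorter augmenting path exists, and the path found is a $\sap$.
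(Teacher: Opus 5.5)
\textbf{There is a genuine gap.} Your soundness half is fine, and more explicit than the paper: every value the algorithm assigns is the length of a concrete simple alternating path from a free vertex, so $\lcp\ge\Lcp$ and $\lcpodd\ge\Lcpodd$. The gap is the converse, which is the real content of the lemma. You never show that a vertex gets its $\lcp$-value in the right phase and equal to $\Lcp$. Nor do you show that every edge with $\Lcp$-sum $2\Delta-2$ is processed in phase $\Delta$. You drop the walk along $P$ yourself and leave only a plan.

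The claim that every such edge has both endpoints labeled ``by the ordering lemmas'' does not follow. Lemmas~\ref{even-even in order}, \ref{new lcp-values}, \ref{skewness increases} and~\ref{bridges created by growth steps} only say that an event handled in phase $\Delta$ never creates an event for an earlier phase. They say nothing about whether an endpoint $u$ with $\Lcp(u)>\Delta$ is even when its phase comes. Such a $u$ was born odd, and it becomes even only through another bridge with the same $\lcp$-sum processed in the same phase (the cascades of Figure~\ref{cascading bridges}). So invoking ``the induction hypothesis that all such bridges are added in phase $\Delta$'' is circular for the current phase, and minimizing the phase does not remove the circularity.

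The growth case has a smaller hole too. In phase $\Delta$ the growth may reach $v$ before its predecessor $x$ on $P$. Then $v$ is born odd and $x$ even, and $v$ becomes even only later in the phase through the bridge $ux$. Your phrase ``labels the next two vertices correctly'' misses this.

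What is missing is a well-founded order inside a phase. The paper uses induction on the skew $\abs{\Lcp(u)-\Lcp(v)}$; equivalently, on the larger endpoint value, since the sum is fixed.
\begin{itemize}
\item Edges of skew $0$ or $2$ have both endpoints with $\Lcp\le\Delta$, which the growth analysis of the phase already covers.
\item If $\Lcp(u)>\Lcp(v)$ and the skew exceeds two, then $\Lcp(v)<\Delta-2$ is correct by the outer induction, and $\Lcp(u)>\Delta$ forces $u$ to be born odd.
\item Let $x,y,u$ be the last three vertices of a shortest even alternating path to $u$, so $y=\mate(u)$. Then $\Lcp(x)\le\Lcp(u)-2$. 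Also $\Lcp(y)\le\Lcpodd(u)+1\le\Lcp(v)+2$, the second inequality by Lemma~\ref{even-odd edges go up at most one level} and the outer induction.
\item Hence $xy$ either has a smaller $\Lcp$-sum, and was handled in an earlier phase, or it has $\Lcp$-sum exactly $2\Delta-2$ with $\Lcp(x)=\Lcp(u)-2$ and $\Lcp(y)=\Lcp(v)+2$. In the latter case its skew is smaller by four, so it was handled earlier in this phase by the inner induction.
\item The paper concludes that $u$ has then been made even with $\lcp(u)=\lcp(x)+2=\Lcp(u)$, so $uv$ is examined with the correct $\lcp$-sum.
\end{itemize}
Some such secondary measure is exactly what your ``minimize the phase'' plan lacks. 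Without it, the completeness direction is not established, and neither is the claim that the path found is a $\sap$.
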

\begin{proof} We use induction on $\Delta$. At the end of phase zero, $S$ consists precisely of the free nodes. They have $\Lcp$- and $\lcp$-value equal to zero. So the claim is true at the end of phase zero. In phase one, we add the non-matching edges connecting two free vertices. If there is such an edge, phase one is the last. Consider next any phase $\Delta$ with $\Delta \ge 2$. 

Assume first that $\Delta$ is even. We first show that at the end of phase $\Delta$ the search structures consist precisely of all vertices $v$ with $\Lcp(v) \le \Delta$ or $\Lcpodd(v) \le \Delta - 1$. 

Let $v$ be any node with $\Lcp(v) = \Delta$, let $p$ be a shortest even length alternating path ending in $v$, let $u$ be the node with $\Lcp(u) = \Delta - 2$ on $p$, and let $x$ and $v$ be the two last nodes on $p$.  Then $v = \mate(x)$, the edge $ux$ is non-matching, and $\Lcpodd(x) = \Delta - 1$.  We will show that $x$ and $v$ receive correct $\lcp$-values. By induction hypothesis, $S$ contains a canonical path of length $\Lcp(u)$, say $q$, ending in $u$. Also $\lcp(u) = \Lcp(u)$, and $q$ extended by the edges $ux$ and $xv$ is an even length alternating path starting in a free node, ending in $v$, and having length $\Delta$. We now distinguish cases according to the label of $v$ at the beginning of phase $\Delta$. 

If $v$ is unlabeled, $x$ is unlabeled, and both nodes are added in a growth-step in phase $\Delta$ at the latest when the edge $ux$ is considered for growth out of $u$. If the growth-step is along a non-matching edge incident to $x$ (this edge may be different from $ux$, but in any case, the growth is out of a node with $\lcp$-value equal to $\lcp(u)$), we set $\lcpodd(x) = \lcp(u) + 1 = \Lcp(u) + 1 = \Lcpodd(u)$ and $\lcp(v) = \lcp(u) + 2 = \Lcp(u) + 2 = \Lcp(v)$. If the growth-step is along a non-matching edge incident to $v$, we obtain $\lcpodd(v) = \Delta - 1 = \Lcpodd(v)$ and $\lcp(x) = \Delta = \Lcp(x)$, and the edge $ux$ becomes a bridge with $\lcp$-sum $2 \Delta - 2$. When this bridge is processed later in the phase, $v$ becomes even with $\lcp(v) = \Delta = \Lcp(v)$. In either case, the canonical paths to $x$ and $v$ are correctly defined. 

If $v$ is already labeled even, $v$ was added to $S$ in an earlier phase and hence $\min(\Lcp(v), \Lcpodd(v)) < \Delta - 2$. Together with $\Lcp(v) = \Delta$, we obtain $\Lcpodd(v) < \Delta - 2$. 

If $v$ is already labeled odd, $\lcpodd(v) \le \Delta - 3$ (growth-steps in earlier phases do not assign value $\Delta - 1$ and more), and $x$ is labeled even with $\lcp(x) \le \Delta - 2$. Then $ux$ is an even-even edge with $\lcp$-sum at most $2\Delta - 4$, and the induction hypothesis ensures that it was considered for a bridge-step in an earlier phase. At that time, it was either added to $S$ or discarded because its endpoints already belonged to the same blossom. In the former case, when the bridge $ux$ was added, $v$ was labeled even with $\lcp(v) \le \Delta$. In the latter case, $x$ and therefore $v = \mate(x)$ were already labeled even. Thus, in either case, 
$v$ was already labeled even at the beginning of the phase, implying that this case cannot occur. 

\smallskip

From now on the parity of $\Delta$ is arbitrary. We need to show that all non-matching edges $uv$ added in the bridge steps of the phase  satisfy $\Lcp(u) + \Lcp(v) = 2\Delta - 2$ and $\lcp(u) = \Lcp(u)$ and $\lcp(v) = \Lcp(v)$. Moreover, if no augmenting path is found in the phase, all edges $uv$ with $\Lcp(u) + \Lcp(v) = 2\Delta - 2$ will have their endpoints in the same blossom. We use induction on the skew. 

We have already established that the $\lcp$- and $\lcpodd$-values less than or equal to $\Delta$ are correct. So, if $\max(\Lcp(u),\Lcp(v)) \le \Delta$, the edge will have  its endpoints in the same blossom. We have now established that all horizontal bridges will have their endpoints in the same blossom,  or a horizontal bridge with $\lcp$-sum $2\Delta - 2$ between different structures is found. 

Consider next a non-matching edge $uv$ with skew larger than two, and assume that all edges with smaller skew are added. We may assume $\Lcp(u) > \Lcp(v)$. Then $\Lcp(v) < \Delta - 2$ and hence $\lcp(v) = \Lcp(v)$ and $v$ is labeled even. For $u$, we have $\Lcp(u) > \Delta$. Thus, $u$ was born odd, $\Lcpodd(u) < \Delta$, and $\lcpodd(u) = \Lcpodd(u)$. Consider an alternating path of length $\Lcp(u)$ to $u$, and let $x - y - u$ be the last three vertices on this path. Then $xy$ is non-matching, and $yu$ is matching.  If $xy$ has a $\Lcp$-sum smaller than $2\Delta - 2$, it was added by the induction hypothesis. If it has $\Lcp$-sum $2\Delta - 2$, we have $\Lcp(x) = \Lcp(u) - 2$ and $\Lcp(y) = \Lcp(v) + 2 \le \Delta - 1$. Thus, the skew of $xy$ is four less than the skew of $uv$ and hence is non-negative and has a smaller absolute value. Thus, by induction hypothesis, $xy$ was added to $S$, and $\lcp(x) = \Lcp(x)$ and $\lcp(y) = \Lcp(y)$. In either case, $u$ was made even with $\lcp(u) = \lcp(x) + 2 = \Lcp(x) + 2 = \Lcp(u)$. Thus, $\lcp(u) + \lcp(v) = \Lcp(u) + \Lcp(v) = 2\Delta - 2$ and the algorithm adds $uv$. If the bridge is between different structures, phase one terminates and has determined a $\sap$. If the bridge closes a blossom, we continue. 
\end{proof}\smallskip

\subsection{The Priority Queue, the Maximum Value of Delta, and the Certificate of Optimality}\label{Priority Queue}

In phase $\Delta$, we grow out of even vertices with $\lcp$-value $\Delta - 2$ and we process even-even edges running between different blossoms and having $\lcp$-sum $2 \Delta - 2$. To schedule the events in their proper order, we maintain a priority queue. A simple queue, organized as an array of buckets, suffices. This is as in Gabow's algorithm. In bucket $\Delta$, we keep all events that need to be handled in phase $\Delta$, i.e., the even nodes with $\lcp$-value $\Delta - 2$ and the even-even edges with $\lcp$-sum $2\Delta - 2$. Lemmas~\ref{bridges created by growth steps}, \ref{even-even in order}, and~\ref{new lcp-values} guarantee that an event in phase  $\Delta$ does not generate an event for an earlier phase. 

How many buckets do we need? This depends on whether we only want to compute a maximum matching or also a certificate of optimality. In the former case, buckets $0$ to $\floor{n/2}$ suffice. Note that an augmenting path is found, when a bridge $xy$ connecting two even vertices in different trees is explored. Then $\lcp(x) + \lcp(y) + 1 \le n - 1$, since an augmenting path has length at most $n - 1$, and hence $\lcp(x) + \lcp(y) \le n - 2$. Thus, the bridge will be found at the latest in phase $\floor{n/2}$. Thus, if one is only interested in computing a maximum matching, the for-loop can be stopped once $2\Delta > n$. 

A certificate of optimality (\cite{LEDAbook}) can be given in the form of a function $\ell: V \rightarrow \N_{\ge 0}$. The vertex labeling $\ell$ has the property that for each edge $uv$ of $G$ either one endpoint is labeled $1$ or both endpoints are labeled with the same integer greater than one. Let $M$ be any matching, let $M_1$ be the edges in $M$ with at least one endpoint labeled $1$ and let $M_i$, $i \ge 2$, be the edges in $M$ with both endpoints labeled $i$. Let $n_i$, $i \ge 0$, be the number of nodes labeled $i$. Then $M = M_1 \cup \bigcup_{i \ge 2} M_i$, $\abs{M_1} \le n_1$, and $\abs{M_i} \le \floor{n_i/2}$, and hence $\abs{M} \le n_1 + \sum_{i \ge 0} \floor{n_i/2}$. \emph{Equality proves the optimality of $M$}. 

The labeling $\ell$ is readily constructed. However, this requires running the last phase to completion, i.e., until the priority queue is empty. The maximum $\lcp$-value of any vertex is no more than $n-1$ and the maximum $\lcp$-sum of any edge is no more than $2n - 2$. So, buckets $0$ to $n - 1$ suffice and once $\Delta$ exceeds $n - 1$, all edges are processed. If no augmenting path exists, all even-even edges must run inside a blossom. So, for every maximal blossom, one labels all vertices of the blossom with the same integer greater than or equal to two; different integers are used for different blossoms. All even vertices, that are not contained in a non-trivial blossom, are labeled zero, and all odd nodes are labeled one. There might also be some edges that are not added to the search structures. These edges are matching and only connected to odd vertices in the search structures. If there is one such edge, we label one endpoint zero and the other one. If there is more than one such edge, we label one endpoint one and all other endpoints with the same integer greater than one; of course, one needs to use an integer that is not used for blossoms yet. 

Figure~\ref{example against early stopping} shows an example that a certificate of optimality cannot yet be constructed after phase $\floor{n/2}$.

\begin{figure}[t]
    \begin{center}
\begin{tikzpicture}[
  every node/.style={circle, draw, inner sep=2pt},
  squig/.style={decorate, decoration={zigzag, segment length=4, amplitude=0.9}, thick},
  scale=0.75, transform shape
]
\node (a) at (0,0) {a};
\node (b) at (2,0) {b};
\node (c) at (4,0) {c};
\node (d) at (6,0){d};
\node (e) at (8,0) {e};
\node (f) at (10,1){f};
\node (g) at (10,-1) {g};

\draw[solid] (a) -- (b);
\draw[squig] (b) -- (c);
\draw[solid] (c) -- (d);
\draw[squig] (d) -- (e);
\draw[solid] (e) -- (f);
\draw[solid] (e) -- (g);
\draw[squig] (f) -- (g);

\end{tikzpicture}
\end{center}
\caption{\label{example against early stopping}The graph has seven vertices, and a maximum matching is shown. A certificate of optimality gives label one to vertices $b$ and $d$, label two to vertices $e$, $f$, and $g$, and label zero to vertices $a$ and $c$. After phase $\floor{7/2}$, the search structure consists of the path from $a$ to $c$, and the two matching edges $de$ and $fg$ still belong to the reservoir of unexplored edges. The labeling algorithm would give label one to $b$ and to one of the vertices in $\sset{d,e,f,g}$ and would give label two to the other vertices in this set. If $d$ is labeled two, the edge $cd$ connects labels zero and two and violates the witness property. After phase seven, all vertices belong to the search structure, and the correct labeling is computed. }
\end{figure}

\section{Parts II and III: From One to a Maximal Set of Augmenting Paths}\label{part II}
In parts II and III, we construct an auxiliary graph $H$, determine a maximal set of edge-disjoint augmenting paths in $H$, and lift them to a maximal set of shortest augmenting paths ($\sap$s) in $G$. We reuse Gabow's realization of these parts without change, which in turn uses the realization in~\cite{GT91}. Only the definition of $H$ and the proof that $H$ has the desired property need to be adapted, since Part I is now realized differently. We have also slightly expanded the correctness proof for part III.

\newcommand{\dbase}{\mathit{dbase}} \newcommand{\base}{\mathit{base}}
\newcommand{\uh}{\mathit{u_H}} \newcommand{\vh}{\mathit{v_H}}
 
\subsection{Part II: Construction of H}

Let us refer to the phase, in which we find an augmenting path, as the \emph{breakthrough phase}. At breakthrough, we have determined one $\sap$ but our goal is to construct a graph $H$ containing all $\sap$s. The growth of the search structures is non-deterministic in the sense that, within each phase, the order in which vertices and edges are added to $S$ is arbitrary. However, the state at the end of a phase is unique in the sense that, independent of the order, the outermost blossoms comprise the same sets of vertices. A blossom is outermost at the end of a phase, if it is not contained in another blossom. We keep two union-find data structures $\base$ and $\dbase$ (delayed base) for the blossoms. During a phase, we perform unions on $\base$ and construct a log of the union-operations, but leave $\dbase$ unchanged. At the end of a phase (except for the last phase (= breakthrough phase), where we find an augmenting path), we use the log to perform the unions also on $\dbase$. In this way, we know the blossoms at the beginning of the phase. We also keep a log of the updates of $\lcp$-values during a phase and, in the breakthrough phase, use this log to revert the $\lcp$-values to their values at the beginning of the phase. We form $H$ from $G$ by contracting the blossoms existing before the breakthrough phase (these are the blocks of $\dbase$) and keeping only edges between them, namely the edges that could have been added to the search structures in some execution. We refer to the blossoms before the breakthrough phase as the maximal blossoms. 

\begin{description}
\item[Nodes:] The nodes of $H$ are the maximal blossoms, i.e., the blocks of $\dbase$. We represent each maximal blossom by its base. 
\item[Edges:] Consider any edge $uv$ of $E$, and let $\uh = \dbase(u)$ and $\vh = \dbase(v)$ be the corresponding nodes of $H$.
  \begin{itemize}
  \item If $(\uh,\vh)$ is a self-loop or both nodes are odd, the edge is discarded. 
  \item even-even: If $u$ and $v$ are even, $\uh \not= \vh$, and the bridge-condition $\lcp(u) + \lcp(v) = 2\Delta - 2$ is satisfied, the edge $(\uh,\vh)$ is added.  
  \item even-odd: Assume $u$ is even and $v$ is odd. If $(u,v)$ is non-matching and $\lcpodd(v) = \lcp(u) + 1$, the edge $(\uh,\vh)$ is added. If $(u,v)$ is matching, and $\lcp(u) = \lcpodd(v) + 1$, the edge $(\uh,\vh)$ is added. So non-matching edges go from even to (one larger) odd layer, and matching edges go from odd to one larger even layer. This is exactly as in growth-steps. 
  \item unlabeled-unlabeled: Matching edges are added, non-matching edges are not added. 
  \item even-unlabeled: If the even node has $\lcp$-value $\Delta - 2$, the edge is added. Note that the growth-steps in phase $\Delta$ would add these edges.
    \end{itemize}
  \end{description}

\begin{lemma}\label{Maximal blossoms and saps} Let $v$ be an odd node that becomes even in phase $\Delta$, and assume phase $\Delta$ is not the last. Let $B$ be the maximal blossom containing $v$ at the end of phase $\Delta$. Then there is no even node $u$ outside $B$ and connected to $v$ with $\lcpodd(v) = \lcp(u) + 1$. Conversely, any even neighbor $u$ of $v$ and outside $B$ satisfies $\lcp(u) \ge \lcpodd(v) + 1$. 

Assume that the breakthrough occurs in phase $\Delta$, and let $B$ be a maximal blossom at the end of phase $\Delta - 1$. Then any sap intersecting $B$ must pass through the base of $B$, i.e., the base is either one of the endpoints of the sap or the sap uses the matching edge incident to the base. 
\end{lemma}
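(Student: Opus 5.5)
For the first claim I argue by contradiction. Suppose $u$ is an even node outside $B$, adjacent to $v$, with $\lcpodd(v) = \lcp(u) + 1$. At the end of phase $\Delta$, $v$ is even with
\[ \lcp(v) = \lcp(x) + 1 + \lcp(y) - \lcpodd(v), \]
where $xy$ is the bridge that made $v$ even. By Lemma~\ref{even-even in order}, the non-matching edge $uv$ then satisfies
\[ \lcp(u) + \lcp(v) \ge \lcp(x) + \lcp(y) = 2\Delta - 2. \]
Substituting $\lcp(u) = \lcpodd(v) - 1$ into the formula for $\lcp(v)$ gives
\[ \lcp(u) + \lcp(v) = \lcp(x) + \lcp(y) = 2\Delta - 2 \]
exactly. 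So $uv$ is an even-even edge with $\lcp$-sum $2\Delta - 2$.

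Since phase $\Delta$ is not the last, Lemma~\ref{correctness} (together with the correctness of the $\lcp$-values) says every such edge is processed in phase $\Delta$. Processing either finds an augmenting path, which is excluded, or merges the blossoms of $u$ and $v$. Either way $u$ ends up in $B$, a contradiction.

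The "conversely" part follows the same way. Because $uv$ is even-odd in the sense of when $v$ was labeled, Lemma~\ref{even-odd edges go up at most one level} gives $\lcpodd(v) \le \lcp(u) + 1$. Equality has just been excluded, and the two values have opposite parity ($\lcp$ even, $\lcpodd$ odd). Hence $\lcpodd(v) \le \lcp(u) - 1$, i.e.\ $\lcp(u) \ge \lcpodd(v) + 1$. One subtlety must be checked: if $u$ became even only after $v$ became even, or in a later phase, then $u$ enters the inequality with its later $\lcp$-value. I plan to handle this with the same argument, applying Lemma~\ref{even-even in order} or Lemma~\ref{skewness increases} from $u$'s side.

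For the second claim, let $P$ be a sap of length $2\Delta - 1$ that intersects $B$ but does not pass through its base $b$ in the stated sense. Then $P$ must enter or leave $B$ at a vertex $v \ne b$ along a non-matching edge $uv$ with $u \notin B$. This holds because every vertex of $B$ other than $b$ is matched inside $B$, so matching edges never leave $B$ except at $b$.

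Consider the prefix of $P$ from its free end that reaches $v$ via $uv$, choosing the orientation in which $v$ is reached by an odd-length prefix. If $v$ was born even, then $v$ is reached by an even path, and by symmetry I take the other end of $P$. The key estimate uses Lemma~\ref{correctness}: the lengths of the two subpaths of $P$ are at least $\Lcp$ or $\Lcpodd$ of the relevant vertices. The first claim then gives $\lcp(u) \ge \lcpodd(v) + 1$ for $u$ outside $B$. Combined with the fact that the remaining part of $P$, from $v$ through $B$ and out again (or to a free vertex), needs at least $\Lcp(v) = \lcp(v)$ edges, the total length of $P$ exceeds $\lcp(x) + \lcp(y) + 1$ for some bridge of $B$'s phase, and hence exceeds $2\Delta - 1$. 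That contradicts $P$ being shortest.

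The main obstacle is this length-accounting step. A shortest augmenting path may weave in and out of several blossoms, and the alternation parity at $v$ determines whether $\Lcp(v)$ or $\Lcpodd(v)$ applies. I expect to need a careful case split on whether $v$ was born odd (so the first claim applies) or born even (in which case it lies in the sub-blossom structure and I recurse to its own entry vertex). I would first try to argue that a shortest augmenting path is "monotone" in levels: each even-to-odd step raises the level by exactly one. Then, by the first claim, entering $B$ at a non-base vertex forces a level drop that cannot be recovered within length $2\Delta - 1$.
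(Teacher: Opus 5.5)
Your argument for the first claim and its converse is the paper's. Substituting $\lcp(u)=\lcpodd(v)-1$ into $\lcp(v)=2\Delta-1-\lcpodd(v)$ gives $\lcp$-sum exactly $2\Delta-2$, so the appeal to Lemma~\ref{even-even in order} is superfluous. The edge is therefore handled in phase $\Delta$ and ends up inside $B$, and parity gives the converse.

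The gap is in the second claim: your length accounting cannot produce a contradiction. The bridges that formed $B$ were processed in phases at most $\Delta-1$, so they satisfy $\lcp(x)+\lcp(y)+1\le 2\Delta-3$. Saying that $\abs{P}$ exceeds this is perfectly compatible with $\abs{P}=2\Delta-1$, so ``hence exceeds $2\Delta-1$'' does not follow. Concretely, your bounds give $\abs{P}\ge \lcp(u)+1+\lcp(v)\ge \lcpodd(v)+\lcp(v)+2=2\Delta_v+1$, where $\Delta_v\le\Delta-1$ is the phase in which $v$ became even. Even the sharpest local bound fails. Since $uv$ joins two different maximal blossoms after phase $\Delta-1$, you have $\lcp(u)+\lcp(v)\ge 2\Delta-2$, which gives only $\abs{P}\ge 2\Delta-1$, i.e.\ equality. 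No lower bound built from $\Lcp$-values at the ends of a single edge can rule $P$ out, because the breakthrough edge itself attains such a bound with equality.

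The missing idea is an exchange argument that exhibits a strictly shorter augmenting path. The paper writes $P=P_1eP_2$ with $e=z_1z_2$ joining two different search trees, and takes the entry edge $uv$ to lie on $P_1$. The first claim gives $\lcp(\mate(v))=\lcpodd(v)+1\le\lcp(u)$, which is two less than the position of $\mate(v)$ on $P_1$. Along the rest of $P_1$, levels of the even-position vertices grow by at most two per two edges, so $\lcp(z_1)\le\abs{P_1}-2$, while $\lcp(z_2)\le\abs{P_2}$. Replacing $P_1,P_2$ by the canonical paths to $z_1,z_2$ gives an augmenting path of length at most $\abs{P}-2$; these paths are disjoint because they lie in different trees.

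Your closing ``monotone levels'' remark points this way. However, the level drop must be turned into such a shortcut, not into a lower bound on $\abs{P}$. Cutting at an inter-tree edge is what makes the shortcut simple.

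Your born-even case is also confused. If $P$ enters $B$ by the non-matching edge $uv$, then $v$ is reached by an odd prefix from that end regardless of how $v$ was born. What changes is that the first claim no longer applies to $v$. Instead you need $\lcp(\mate(v))\le 2\Delta-2-\lcp(v)\le\lcp(u)$. This follows because $\mate(v)$ became even in a phase at most $\Delta-1$ and $\lcp(u)+\lcp(v)\ge 2\Delta-2$.
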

\begin{proof} Assume otherwise, i.e., $v$ is connected to an even node $u$ outside $B$ with $\lcpodd(v) = \lcp(u) + 1$. Since $v$ became even in phase $\Delta$, $\lcp(v) = 2 \Delta - 1 - \lcpodd(v)$ and hence $\lcp(v) = 2 \Delta - 1 - \lcp(u) - 1$, i.e., $\lcp(u) + \lcp(v) = 2 \Delta - 2$ and $uv$ would be considered for addition in phase $\Delta$. Since phase $\Delta$ is not the last, the edge does not complete an augmenting path. Therefore, when $uv$ is considered for addition, $u$ and $v$ already belong to the same blossom or do so after the addition of $uv$. So, $u$ and $v$ belong to the same maximal blossom at the end of phase $\Delta$, a contradiction. 

We have $\lcpodd(v) \le \lcp(u) + 1$, since $v$ must be odd after the growth-steps out of $u$. Together with $\lcpodd(v) \not= \lcp(u) + 1$, we obtain $\lcpodd(v) \le \lcp(u) - 1$. 

Since the breakthrough occurs in phase $\Delta$, saps have length $2 \Delta - 1$. Let $B$ be a maximal blossom at the end of phase $\Delta - 1$, and assume that there is a sap $P$ that intersects $B$ but does not pass through the base of $B$. 
We can write $P = P_1 e P_2$, where $e = z_1 z_2$ is an edge connecting two distinct search structures. The intersection of $P$ with $B$ is either part of $P_1$ or part of $P_2$, say it is part of $P_1$. 

$P_1$ enters $B$ on a non-matching edge $uv$ with $u$ even and outside $B$. Let $s_1 = \mate(v)$, $t_1$, $s_2$, $t_2$, $s_3$, \ldots, $t_{\ell - 1}$, $s_\ell = z_1$ be the vertices on the subpath of $P_1$ from $\mate(v)$ to $z_1$. Then $\lcp(s_1) \le \lcp(u)$ and $\lcp(s_i) \le \lcp(s_{i-1}) + 2$ for $i \ge 1$, and hence $\lcp(z_1) \le \abs{P_1} - 2$. Furthermore, $\lcp(z_2) \le \abs{P_2}$. Replacing $P_1$ by the canonical path to $z_1$ and $P_2$ by the canonical path to $z_2$ shortens the augmenting path and maintains simplicity of the path, a contradiction. 
\end{proof}

The lemma justifies that blossoms can be contracted at the end of any phase that is not the last. 
The following lemma is Corollary 3.3 in \cite{Gabow:GeneralMatching}. Since we changed part I, we need to reprove it.

\begin{lemma}\label{H and G} A set of edges $P$ forms an augmenting path in $H$ if and only if it is the image of a $\sap$ $Q$ in $G$. \end{lemma}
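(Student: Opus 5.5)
I would prove the two directions separately, using $\lcp$- and $\lcpodd$-values frozen at the beginning of the breakthrough phase $\Delta$. All saps have length $2\Delta-1$. By Lemma~\ref{correctness}, every defined $\lcp$- and $\lcpodd$-value equals $\Lcp$ resp.\ $\Lcpodd$. The central tool is a \emph{potential} on the nodes of $H$. Every edge of $H$ raises the level by exactly one when traversed in the direction of growth, with one exception: even-even bridge edges, which satisfy $\lcp(u)+\lcp(v)=2\Delta-2$. Unlabeled vertices are assigned levels $\Delta-1$ (the one adjacent to an even node with $\lcp=\Delta-2$) and $\Delta$ (its mate), exactly as the growth-steps of phase $\Delta$ would do.

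\textbf{Direction ``sap $\Rightarrow$ augmenting path in $H$''.} Let $Q$ be a sap in $G$ with free endpoints $f_1,f_2$.
\begin{enumerate}
\item Write $Q=Q_1\,e\,Q_2$ and compare prefix lengths with $\Lcp$/$\Lcpodd$. For a vertex $w$ at position $i$ from $f_1$ and position $2\Delta-1-i$ from $f_2$, we have $\Lcp(w)\le i$ or $\Lcpodd(w)\le i$ (by parity), and symmetrically from the other end. Summing the two bounds forces equality along a ``climbing'' prefix from each end.
\item Consequently, every edge of $Q$ falls into one of two kinds. Either it is a level-increasing even-to-odd non-matching edge or odd-to-even matching edge, i.e.\ exactly an $H$-edge of type even-odd or even-unlabeled/unlabeled-unlabeled. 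Or it is the single middle edge, which is even-even with $\lcp$-sum $2\Delta-2$ (a bridge edge of $H$). Otherwise it lies inside a maximal blossom.
\item By the second part of Lemma~\ref{Maximal blossoms and saps}, $Q$ enters and leaves each maximal blossom through its base. So contracting the blocks of $\dbase$ maps $Q$ onto a path in $H$. This path is simple because each blossom is visited in one contiguous segment, and it is alternating because the matching edge at the base survives.
\end{enumerate}

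\textbf{Direction ``augmenting path in $H$ $\Rightarrow$ image of a sap''.} Let $P$ be an augmenting path in $H$.
\begin{enumerate}
\item Assign levels to the nodes of $H$ as above and show that $P$ consists of an ascending part from one free node, exactly one bridge edge, and a descending part. The argument: every non-bridge edge changes the level by $\pm1$, and the level parity forces alternation. Lemma~\ref{Maximal blossoms and saps} (first part: even neighbours $u$ outside $B$ satisfy $\lcp(u)\ge\lcpodd(v)+1$, so no $H$-edge enters a blossom ``sideways'') rules out zigzags within the blossom structure.
\item Hence $|P|$ counted with blossom interiors is $2\Delta-1$. To lift $P$, replace each blossom node $B$ that $P$ enters at a non-base vertex $v$ (leaving through the base's matching edge) by the even-length path inside $B$ from $v$ to the base. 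This path exists and has length $\lcp(v)-\lcp(b)$ by Lemma~\ref{structure of saps}, and it is contained in $B$ by the blossom construction.
\item The lifted path is simple since distinct blossoms are disjoint. Its length telescopes to $\lcp(z_1)+1+\lcp(z_2)=2\Delta-1$, where $z_1z_2$ is the bridge edge. So it is a sap whose image is $P$.
\end{enumerate}

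\textbf{Main obstacle.} The main obstacle is the level argument in the second direction. One must show that an arbitrary alternating path in $H$ cannot descend and re-ascend, or use two bridge edges. The case that needs care is edges into formerly odd vertices that became even, whose $\lcpodd$ is below the blossom's level. I would handle it with Lemma~\ref{Maximal blossoms and saps}: an $H$-edge into such a vertex from outside is only kept if $\lcpodd(v)=\lcp(u)+1$, and that lemma shows this cannot occur. I expect this check to require going through each edge type in the definition of $H$.
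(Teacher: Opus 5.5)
Your architecture is the paper's: contract a $\sap$ via the second part of Lemma~\ref{Maximal blossoms and saps}, and lift an $H$-path via Lemma~\ref{structure of saps}. Your telescoping potential is a cleaner way than the paper's ``last edge added to $S$'' remark to see that the lift has length $2\Delta-1$. The gap is step~1 of your first direction. It is the only place where you argue that edges of $Q$ between distinct maximal blossoms survive the pruning in the definition of $H$; the paper just asserts this. As written it cannot work. The prefix and suffix of $Q$ give only upper bounds: for a non-matching edge $q_iq_{i+1}$ ($i$ even), $\Lcp(q_i)\le i$ and $\Lcp(q_{i+1})\le 2\Delta-2-i$. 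Summing upper bounds never forces equality, and $\Lcp(w)+\Lcpodd(w)\ge 2\Delta-1$ is false for vertices inside blossoms. You need lower bounds on the state at the start of phase $\Delta$, all available from Lemma~\ref{correctness}:
(a) A vertex $w$ still labeled odd has $\Lcp(w)+\Lcpodd(w)\ge 2\Delta-1$, hence $\Lcp(w)>\Delta$ since $\lcpodd(w)\le\Delta-2$. Otherwise the last edge $pw$ of a shortest odd path to $w$ has $\Lcp$-sum at most $2\Delta-4$, and $p$ and $w$ would already share a blossom.
(b) An unlabeled vertex has $\Lcp\ge\Delta$.
(c) An even--even edge between different maximal blossoms has $\lcp$-sum at least $2\Delta-2$.

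With these, an edge-by-edge check goes through. Non-matching odd--odd, odd--unlabeled and unlabeled--unlabeled edges cannot lie on $Q$, because their $\Lcp$-sum exceeds $2\Delta-2$. For $q_i$ even and $q_{i+1}$ odd, $\lcpodd(q_{i+1})\ge 2\Delta-1-\Lcp(q_{i+1})\ge i+1\ge\lcp(q_i)+1$, so Lemma~\ref{even-odd edges go up at most one level} gives equality; the other orientation is symmetric. Even--even edges get $\lcp$-sum exactly $2\Delta-2$. An even--unlabeled edge forces $\lcp=\Delta-2$, since otherwise that growth would have happened in an earlier phase. Matching edges between distinct blocks are automatically $H$-edges.

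Your ``main obstacle'' in the second direction is not real, and the remedy you propose misreads $H$. The rules use the labels at the start of phase $\Delta$, so a formerly odd vertex inside a block of $\mathit{dbase}$ counts as even. An $H$-edge into it from another block is therefore an even--even edge with $\lcp$-sum $2\Delta-2$. That is a legitimate turning edge, not something the first part of Lemma~\ref{Maximal blossoms and saps} excludes; that part is not needed at all. The ascending/turn/descending shape follows from the rules alone. Entering an odd node $t$ by a non-matching edge, you must take its matching edge up to the base of its child blossom. Entering a blossom by a non-matching edge, you must take the base's matching edge down to the parent odd node $t$, or stop at a root. The only non-matching $H$-edges at $t$ go to even $u$ with $\lcp(u)=\lcpodd(t)-1$. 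So once the path descends it keeps descending, and there is exactly one turn: a non-matching edge entering an even node. Finally, for even $\Delta$ the turn can be a segment $u\,x\,x'\,u'$ through an unlabeled matched pair with $\lcp(u)=\lcp(u')=\Delta-2$. That segment contains no even--even edge, so your step~2 claim that the middle edge is even--even fails there, although your level assignment for unlabeled vertices handles the case.
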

\begin{proof} For a path $p$ and a maximal blossom $B$, let $\gamma(p,B)$ be the part of $p$ inside $B$. Consider any $\sap$ $q$ in $G$ and any maximal blossom $B$ with $\gamma(q,B) \not= \emptyset$. By Lemma~\ref{Maximal blossoms and saps}, $\gamma(q,B)$ is an even-length alternating path connecting the base of $B$ with some other vertex of $B$. So, contraction of all such $\gamma(q,.)$ yields an augmenting path in $H$. Note that the edge of $p$ (if any) incident to the base of the blossom and outside the blossom, is matching, and the edge of $p$ incident to the other vertex  and outside the blossom is non-matching.  

  Conversely, consider an augmenting path $p$ in $H$ and any node of $H$ on $p$. If it represents a trivial blossom, the incident edges in $H$ correspond to edges in $G$. If it represents a maximal blossom $B$, one of the incident edges in $H$ is non-matching and the other (if it exists) is matching. Let $v$ be the endpoint in $B$ of the pre-image of the non-matching edge. Then there is an even-length alternating path connecting $v$ to the base $b$ of $B$ (Lemma~\ref{structure of saps}). The base of $B$ is free in $G$ if no matching edge is incident to the contraction of $B$ in $G$. If there is a matching edge incident to the contraction of $B$ in $p$, its pre-image must be the matching edge incident to the base of $B$ in $G$. Thus, we can lift $p$ to a augmenting path $q$ in $G$. Let $uv$ be the edge of $q$ that was added last to $S$, say, in phase $\Delta$. Then $\lcp(u) + \lcp(v) = 2\Delta - 2$, and the path has length $2\Delta - 1$. Since no bridge between different trees was added in an earlier phase, it is a $\sap$. \end{proof}

Note that the second part of the proof of Lemma~\ref{H and G} shows how to lift augmenting paths from $H$ to $G$.

\newcommand{\eventime}{\mathit{eventime}}
\newcommand{\rep}{\mathit{rep}}
\newcommand{\PG}{\mathit{PG}}\newcommand{\free}{\mathit{free}}\newcommand{\baseH}{\mathit{baseH}} \newcommand{\bh}{\mathit{bh}}\newcommand{\eh}{\mathit{e_H}}

\subsection{Part III: Construction and Augmentation of a Maximal Set of Saps}\label{phase2}\label{Phase2}

This is essentially as in~\cite{GT91,Gabow:GeneralMatching}. We have only slightly expanded some arguments. We include this section for completeness. 

Recall the strategy for bipartite graphs. We first construct a layered network. The $i$-th layer contains all nodes that can be reached from a free node by an alternating path of length $i$; layer zero consists of the free nodes. We stop the construction once we reach a layer containing, again, a free node. The layered network can be constructed using \emph{breadth-first search}. The construction of the graph $H$ corresponds to the construction of the layered network; augmenting paths in $H$ are in one-to-one correspondence to saps in $G$. 

Then we construct a maximal set of edge-disjoint augmenting paths in the layered network. We explore the layered network from a free node using \emph{depth-first search}. When we reach a free node and hence have found an augmenting path, the path corresponds to the recursion stack and hence is readily found. We delete the path and all its incident edges from the graph simply by tracing back the recursion and declaring all nodes on the path finished. Moreover, when we retreat from a node, we delete the node from further consideration, as we can be sure that no free node can be reached through the node. DFS satisfies two crucial properties: First, in the case of a breakthrough, the recursion stack contains the augmenting path, and second, in the case of a retreat from a node, the node can be ignored from then on. 

\emph{How can we extend this reasoning to $H$?} Again, we use depth-first search. Then blossoms are closed by forward/backward edges connecting two even nodes in the same tree. For blossom-closing edges, we have the choice whether to explore them as forward or backward edges. We will see that forward is the appropriate choice. Consider an edge $xy$ closing a blossom with $b(x)$ being an ancestor of $b(y)$. Let $v_0 = b(x)$, $ u_1$, $v_1$, $u_2$, $v_2$, \ldots, $u_k$, $v_k =b(y)$ be the tree path from $b(x)$ to $b(y)$. When the DFS reaches $b(y)$, the nodes $v_1$, $v_2$, \ldots are only partially explored. However, when the DFS returns to $b(x)$, the recursive calls for $v_1$, $v_2$, \ldots, $v_k = b$ are completed, and none of them discovered an augmenting path. We now explore the edge $b(x)b(y)$ (induced by $xy$) as a forward edge. As a result, the nodes $u_1$ to $u_k$ become even, and we need to make recursive calls for them. In what order? To keep the spirit of depth-first search, we should put $u_k$ to $u_1$ (in this order!!!) on the recursion stack (or an equivalent) and start to explore edges out of $u_1$ (the previously odd node closest to $b(x)$). Note that this guarantees that in case of a breakthrough, the recursion stack contains the augmenting path, and that in the case of an unsuccessful search, the node can be ignored in the future. If the calls were made in a different order, the above would not be true. For example, if we were to search from $u_k$ first and be successful, the removal of the augmenting path would remove the justification for the nodes $u_1$ to $u_{k-1}$ to be even. 

\newcommand{\findapset}{\mathit{find\_ap\_set}}
\newcommand{\findap}{\mathit{find\_ap}}

 \newcommand{\CP}{\mathit{CP}}

How can we distinguish forward and backward edges? For this end, we record for each node the time when it becomes even, i.e., we maintain a counter $t$, which we increment whenever a node becomes even and which we use to define $\eventime(v)$ for even vertices $v$. Then an edge $xy$ induces a  forward edge if $\eventime((b(x)) < \eventime(b(y))$. The complete algorithm for phase 3 is shown in Figure~\ref{phase 3}. In the sequel, $S$ and $\CP$ are as defined in the algorithm; $S$ stands for search structure, $\CP$ stands for collection of paths.

\begin{figure}[bth]
\begin{algorithmic}[t!]
     \Function{$\findapset$}{} 
  \State Initialize $S$ to an empty graph and $\CP$ to an empty set of paths;
  \For{each free vertex $f$}   
  \If{$f \not\in V(\CP)$}
  \State add $f$ to $S$ as the root of a new search tree;
  \State $\findap(f)$\;
  \EndIf
  \EndFor
  \EndFunction
  \medskip

\Function{$\findap$}{$x$}   \Comment{$x$ is even}
\For{each non-matching edge $xy$}      \Comment{scan $xy$ from $x$}
\If {$y \not\in V(S)$}
  \If{$y$ is free}    \Comment{$y$ completes an augmenting path}
   \State add $xy$ to $S$ and add the path $P(x)y$ to $\CP$;
   \State terminate every currently executing recursive call to $\findap$;
   \Else  \Comment{grow step}
   \State add $xy$ and $y y'$ to $S$, where $yy'$ is a matching edge;
   \State $\findap(y')$\;
  \EndIf
\Else
\If{$b(y)$ is an even proper descendant of $b(x)$ in $S$}   \Comment{blossom step}
  \State \Comment{equivalent test: $b(y)$ became even strictly after $b(x)$};\smallskip
  \State \begin{minipage}[t]{0.82\textwidth}let $u_i$, $i = 1 \ldots k$, be the odd vertices on the path from $b(x)$ to $b(y)$, ordered such that $u_1$ is closest to $b(x)$.\end{minipage}\smallskip
  \State \begin{minipage}[t]{0.82\textwidth} combine all blossoms on the path from $b(x)$ to $b(y)$ into a blossom with base $b(x)$;\end{minipage}\smallskip
  \For {$i \leftarrow 1$ to $k$}
  \State $\findap(u_i)$\;
  \EndFor
  \EndIf
  \EndIf
  \EndFor
 \EndFunction
\end{algorithmic}
\caption{\label{phase 3} Phase III of the Matching Algorithm. Reprinted from~\cite{Gabow:GeneralMatching} with notation adapted. $\findap$ defines a canonical path $P(x)$ for each even vertex $x$. For a free node, this is the trivial path. In a growth step, $P(y') = P(x)y y'$, and in a blossom step, $P(u_k) = P(x) P(y)_{b(y)}^{\mathit{rev}} u_k$ and $P(u_i) = P(u_{i+1}) v_i u_i$ for $i < k$. Here $P(y)_{b(y)}$ is the suffix of the path to $y$ starting in $b(y)$ and $\mathit{rev}$ denotes reversal.}
\end{figure}

We view the search structure as an ordered forest. The trees are ordered from left to right according to the time of their construction. For even nodes, the children are ordered from left to right according to the time of addition to the search structure; the child added first is the leftmost child, and the child added last is the rightmost child.  So, a  growth step from $x$ adding nodes $y$ and $y'$ adds $y$ as the new rightmost child of $x$. For a blossom step induced by the edge $xy$, where $b(y)$ is a descendant of $b(x)$, let $v_0 = b(x)$, $ u_1$, $v_1$, $u_2$, $v_2$, \ldots, $u_k$, $v_k =b(y)$  be the tree path from $b(x)$ to $b(y)$. The nodes $u_1$ to $u_k$ become even by the blossom step and the nodes $v_0$ to $v_k$ were already even before the blossom step. For $v_i$, $0 \le i < k$, let $\ell_i$ and $r_i$ be the ordered list of children left and right of $v_{i+1}$, respectively, and let $c$ be the ordered list of children of $v_k$. The blossom step merges the nodes on the path into a blossom with children list $\ell_0 \ell_1 \ldots \ell_{k-1} c r_{k-1} \ldots r_1 r_0$.

A node $x$ is \emph{completely scanned} if $\findap(x)$ has been called and was not terminated prematurely. In particular, all non-matching edges incident to $x$ have been scanned. 

\begin{lemma} An even node that does not belong to $\CP$ is completely scanned. \end{lemma}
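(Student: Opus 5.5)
We argue directly from the control flow of $\findapset$ and $\findap$. Every even node $x$ of $S$ acquires its even status in one of three ways: as a free root $f$ for which $\findapset$ calls $\findap(f)$; as the mate $y'$ in a grow step, immediately followed by $\findap(y')$; or as one of the previously odd nodes $u_i$ in a blossom step, after which the loop calls $\findap(u_1),\ldots,\findap(u_k)$ in this order. Even nodes inside a contracted blossom are covered by the same three cases, since each vertex became even in exactly one such event. It therefore suffices to show that, for every even node $x$ not in $\CP$, either $\findap(x)$ is called and runs to completion, or $x$ ends up on a path of $\CP$.

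The key invariant concerns premature termination. A call to $\findap$ is terminated prematurely only when some call discovers a free vertex $y$. At that moment every currently executing call is aborted and the path $P(x)y$ is added to $\CP$. We then prove the invariant that, at any moment, the set of even nodes whose $\findap$ call is currently executing (the recursion stack) is contained in $V(P(z))$, where $z$ is the node whose call is on top of the stack.

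This invariant holds after a grow step, since $P(y') = P(x)yy'$ and $y'$ is pushed on top of $x$. It also holds after a blossom step, and this is exactly why the calls are ordered $u_1$ first. While $\findap(u_i)$ runs, the pending outer calls are on $P(x)$, the completed calls $u_1,\ldots,u_{i-1}$ are no longer on the stack, and $P(u_i) = P(u_{i+1})v_iu_i$ contains $P(u_k)$, which in turn contains $P(x)$. So every aborted call belongs to a node on $P(z)$, which is a prefix of the path added to $\CP$. Hence each aborted node lies in $\CP$.

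Two further gaps must be closed.

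First, a blossom loop may be cut off before $\findap(u_j)$ is ever called for some $j > i$. In that case we must check that $u_j \in V(P(u_i))$. This holds because $P(u_i)$ passes through $u_{i+1},\ldots,u_k$ by the recursive definition of the canonical paths in the blossom step.

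Second, $\findapset$ skips roots $f$ that are already in $V(\CP)$. Such roots are excluded by hypothesis.

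Finally, a call that is never aborted scans every non-matching edge incident to its node, so that node is completely scanned. The main obstacle is the verification of the stack invariant across nested blossom steps, in particular confirming that $P(u_i)$ really contains all still-pending $u_j$ and all ancestors on the stack. I would carry this out by induction on the sequence of calls, using the stated formulas for $P$.
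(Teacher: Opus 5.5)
Your proof is correct and takes the same route as the paper's three-line argument: even $\Rightarrow$ $\findap$ was called; not in $\CP$ $\Rightarrow$ the call was not aborted $\Rightarrow$ completely scanned. You additionally justify what the paper only asserts, using the fact that each callee's canonical path extends its caller's, so every aborted call and every never-started $\findap(u_j)$ from a cut-off blossom loop lies on the path $P(x)y$ added to $\CP$; for the cut-off $u_j$, the paper's claim that ``$\findap(v)$ has been called'' is literally false, so this repair is needed.
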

\begin{proof} Let $v$ be an even node that does not belong to $\CP$. Since $v$ is even, $\findap(v)$ has been called. Since $v \not\in \CP$, the call has not terminated prematurely, and hence $v$ is completely scanned. \end{proof}

\begin{lemma}[Lemma A.2 in \cite{Gabow:GeneralMatching}]\label{related or even-odd}At any point in the algorithm, let $xy$ be an edge with $x$ even and $y \in S$. Then either $b(x)$ and $b(y)$ are related, or $y$ is odd or belongs to $\CP$ and $y$ is to the left of $x$, or $x$ belongs to $\CP$ and $y$ belongs to a tree whose growth started after $x$ was added to $\CP$.  \end{lemma}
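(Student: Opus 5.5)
We prove the statement as an invariant: it holds at every point of the execution of $\findapset$. We argue by induction over the elementary operations: adding a root, a grow step, a blossom step, adding a path to $\CP$ (with premature termination), and the scan of an edge.

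The invariant is trivially true initially. Fix an edge $xy$ with $x$ even and $y \in S$, and consider the first moment at which both conditions hold, i.e., $x$ is even and $y \in S$. We then show that the disjunction holds at that moment and is preserved afterwards.

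\emph{Establishing the disjunction.} Consider the moment when $x$ became even and its edges became scannable.

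\begin{itemize}
\item If $y$ was already in $S$ at that point, we use the DFS structure of the ordered forest. Every node added after $x$'s tree started is either a descendant of $x$ (hence related) or lies to the right of $x$. The only way $y$ can precede $x$ without being related is that $y$ lies to the left of $x$, which means $y$'s recursive call had finished or been terminated.
  \begin{itemize}
  \item If $y$ was even and completely scanned, then $xy$ was scanned from $y$. At that time $x$ was either not in $S$, in which case a grow step would have made $x$ a descendant of $y$, or odd and to the left. The grow case makes $b(x)$ and $b(y)$ related, which contradicts unrelatedness unless the relationship was later destroyed.
  \item Hence the unrelated case forces $y$ to be odd, or $y$'s scan to have been cut short, which happens only when $y$ lies on a path added to $\CP$.
  \end{itemize}
\item If $y$ enters $S$ after $x$ is even, then either $y$ enters via a scan from $x$ itself (related), or during a later recursive call inside $x$'s subtree (related), or after $x$'s call terminated. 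Termination of $x$'s call while $x$ is unrelated to $y$ happens only when $x$ was put in $\CP$. In that case $y$'s tree started afterwards, which gives the third alternative.
\end{itemize}

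\emph{Preservation under later steps.} Grow steps only add nodes, so they preserve all three alternatives. Adding a path to $\CP$ only enlarges $\CP$, so any statement of the form ``belongs to $\CP$'' persists. The left/right order is stable, because the children-list rule $\ell_0 \ell_1 \ldots \ell_{k-1} c\, r_{k-1} \ldots r_0$ for blossom steps preserves the relative order of nodes outside the path.

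The delicate case is a blossom step that turns an odd $y$ even. We must show that then either $b(y)$ becomes related to $b(x)$, or $y$ lies in $\CP$. Here we use the order in which $\findap(u_1),\dots,\findap(u_k)$ is called, together with the fact that the blossom merges $y$ into $b$ of an ancestor.

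\begin{itemize}
\item If $y$ was odd and to the left of $x$, then $y$'s base is an ancestor-path node of some earlier branch. The blossom step can only occur from a node currently on the recursion stack, i.e., an ancestor of $x$, or else from within a branch already left.
  \begin{itemize}
  \item In the first subcase the new blossom contains $y$ and has a base that is an ancestor of $x$, so they become related.
  \item In the second subcase the step happens inside a later tree. That is impossible, because blossoms lie within one tree and the earlier tree is finished unless it was terminated via $\CP$.
  \end{itemize}
\end{itemize}

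I expect this interaction of blossom steps with the left/right order to be the main obstacle. I would handle it by maintaining as an auxiliary invariant that the active recursion stack consists exactly of the rightmost path of the current tree. From this it follows that everything to the left is either fully scanned or lies in $\CP$, and that a blossom step only merges nodes on or below the rightmost path.
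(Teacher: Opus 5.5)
Your plan has the same skeleton as the paper's proof: induction over growth and blossom steps, using that an even vertex to the left and not in $\mathit{CP}$ is completely scanned. However, the tool you propose for the crucial geometric step fails.

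\textbf{The auxiliary invariant is false.} It is not true that the active recursion stack consists \emph{exactly} of the rightmost path. Once $\mathit{find\_ap}(y')$ returns to its caller, the rightmost path still runs through the finished vertices $y$ and $y'$. Likewise, after a blossom step all children of the new blossom (the list $\ell_0\ell_1\ldots\ell_{k-1}c\,r_{k-1}\ldots r_0$) belong to finished subtrees. What is true, and what the paper uses, is weaker: the blossom of the currently scanning vertex lies on the rightmost path, and the blossoms of all active or pending vertices are ancestors of it or equal to it. Hence the two nodes created by a growth step are rightmost, and the blossom created by a blossom step lies on the rightmost path, so every node of $S$ is related to it or to its left.

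\textbf{Your first bullet misses the blossom-step case.} When $x$ becomes even through a blossom step, $y$ may have entered $S$ after $x$ and to the right of $x$ (e.g.\ inside $r_0$). So your inference ``$y$ precedes $x$ and is unrelated, hence $y$ is to the left'' is unjustified. Only the rightmost-path position of the new blossom makes $y$ related to $b(x)$ or left of it. To rule out an even $y\notin\mathit{CP}$ on the left here, the paper applies the induction hypothesis just before the blossom step, to the same edge with the endpoints' roles swapped. An even vertex $y\notin\mathit{CP}$ with an unrelated odd neighbour $x$ to its right violates all three alternatives.

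\textbf{Your delicate case argument is invalid.} The step ``a node currently on the recursion stack, i.e., an ancestor of $x$'' does not hold: the scanning vertex $v$ need not be an ancestor of $x$, since $x$ may lie in a finished subtree to the left of $v$. The correct argument runs as follows.
\begin{itemize}
\item $y$ lies on the tree path from $b(v)$ down to $b(w)$, so $b(v)$ is an ancestor of $y$.
\item Since $b(v)$ is on the rightmost path, $x$ is not to its right.
\item $x$ is not to its left, since then $x$ would be left of $y$.
\item $x$ is not a proper ancestor of $b(v)$, since then $x$ would be related to $y$.
\item Hence $x$ lies in or below $b(v)$, and after the step $b(x)$ and $b(y)=b(v)$ are related.
\end{itemize}
Your ``second subcase'' never occurs, because every blossom step is triggered by the currently executing call.

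With these repairs the plan goes through. Your explicit treatment of an odd left neighbour $y$ turning even actually covers a case that the paper's proof dismisses only with ``$y$ cannot have moved to the right''.
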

\begin{proofnoqed} We use induction on the number of operations executed by the algorithm. 

\begin{description}
    \item[Growth Steps:] A growth step adds two new nodes, say $z$ and $z'$,  to the search structure, one odd and one even. The nodes are rightmost. Consider any edge $z's$. Since $z'$ is rightmost, $b(s)$ is either related to $z'$ or $s$ is to the left of $z'$. In the latter case, $s$ cannot be even and not belong to $\CP$, as then it would be completely scanned and $z'$ would already be part of the search structure (at the latest, $z'$ would be added as a child of $s$). If $s$ belongs to $\CP$, it does not belong to the current tree. If $s$ does not belong to $\CP$, it is odd. In either case, we have established the claim with $x = z'$ and $y = s$. 
    
    Consider any edge $zs$. The same reasoning applies. Since $z$ is rightmost, $b(s)$ is either related to $z$ or $s$ is left of $z$. In the latter case, $s$ cannot be even and not belong to $\CP$, as then it would be completely scanned and $z$ would already be part of the search structure (at the latest, $z$ would be added as a child of $s$). This establishes the claim with $x = s$ and $y = z$. 
    
    \item[Blossom Steps:] Consider a blossom step induced by the edge $vw$ with $b(w)$ being a descendant of $b(v)$. Then $b(v)$ is on the rightmost path. Consider any edge $xy$ with $x$ even after the blossom step and $y \in S$. If $b(x)$ and $b(y)$ are related after the blossom step, we are done. Otherwise, $b(x)$ and $b(y)$ were not related before the blossom step (a blossom step does not destroy any ancestor-relationship), and $x$ was either odd or even before the blossom step. If $x$ was even before the blossom step, either $y$ was to the left of $x$ before the blossom step (and $y$ was either odd or belonged to $\CP$) or $x$ belonged to $\CP$. In the latter case, $x$ does not belong to the current tree. If $y$ does neither, the claim holds by induction hypothesis. If $y$ belongs to the current tree, it belongs to a later tree than $x$. In the former case, $y$ cannot have moved to the right by the blossom step. If $x$ was odd before the blossom step, $x$ is part of the newly formed blossom and therefore lies on the rightmost path after the blossom step. Thus, $b(y)$ is either related to $b(x)$ after the blossom step or to the left of $x$. If $y$ is to the left and is even, it must belong to $\CP$, as otherwise, we would have had an even-odd edge incident to a node outside $\CP$ going to the right before the blossom step. \hfill \qed
\end{description} 
\end{proofnoqed} 

\begin{lemma}[Lemma A.3 in \cite{Gabow:GeneralMatching}]\label{same blossom one} At any point in the algorithm, let $t$ be an odd vertex, whose even mate $t'$ is completely scanned, and let $s$ be an odd descendant of $t$. After a blossom step that makes $s$ even, $s$ and $t$ belong to the same blossom. 
\end{lemma}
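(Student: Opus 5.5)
The plan is to analyse the blossom step that makes $s$ even, using the DFS structure of $\findap$ and Lemma~\ref{related or even-odd}. Suppose the step is induced by an edge $vw$ with $b(w)$ an even proper descendant of $b(v)$. Let $v_0 = b(v), u_1, v_1, \ldots, u_k, v_k = b(w)$ be the tree path, so that $s = u_j$ for some $j$. I will show that $t$ also lies on this path, strictly below $v_0$. Then $t$ is one of the odd nodes $u_i$, it is merged into the same new blossom with base $b(v)$, and the claim follows. The case in which $t$ is already inside an even blossom on the path is immediate, so it suffices to show that $b(v)$ is a proper ancestor of $t$.

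Since $s$ is an odd descendant of $t$, the tree path from $t$ to $s$ passes through $t'$, the even child of $t$. Both $v_0$ and $s$ lie on the root-to-$s$ path, and $v_0$ is even. Hence $v_0$ is either a proper ancestor of $t$, which is what we want, or a descendant of $t'$, that is, $b(t')$ or something below it. I will rule out the second alternative. If $v_0$ lies in the subtree of $t'$, then the call $\findap(v)$ that scans $vw$ is made while $\findap(t')$, or the search of a node of its blossom, is active. Moreover, $v$ is a descendant of $t'$, because blossoms inside the subtree are rooted at or below $b(t')$. On the other hand, $t'$ is completely scanned, so $\findap(t')$ has returned without termination. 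After that return, no node in the subtree of $t'$ is ever active again, by the DFS discipline: nodes made even by later blossom steps are scanned only within calls of their new base.

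The main obstacle is exactly this timing argument. A blossom step occurring after $t'$ finishes could merge the subtree of $t'$ into a blossom whose base lies above $t$. In that case $t$ itself becomes even and ends up in the same blossom as $s$, which is again fine. So I would split into two cases. If $b(t)$ is even at the time of the step, then $t$ and its subtree are already absorbed, and $s$ joins that blossom through its base, which lies above $t$. If $t$ is still odd, then by the DFS property nothing below $t'$ can still initiate a scan, so $v_0$ must be a proper ancestor of $t$, and $t = u_i$ for some $i \le j$. Lemma~\ref{related or even-odd} will be used to exclude the possibility that the inducing edge $vw$ comes from a part of the tree to the side of $t$; such edges join unrelated blossoms and therefore do not trigger a blossom step.
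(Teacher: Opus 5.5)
Your reduction (show that the step making $s$ even is based at or above $t$) is sound, but the argument breaks exactly at the obstacle you flag. The claim that after $\findap(t')$ returns no vertex in the subtree of $t'$ is ever active again is false, and your case (1) does not repair it.

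For example, let $t'$ have odd children $a$ and $c$, with $s$ a child of $\mate(c)$, and let $\findap(t')$ return. An edge from the parent of $t$ to $\mate(a)$ triggers a blossom step that makes $t$ and $a$ even and calls $\findap(a)$. An edge from $a$ to $\mate(s)$ then triggers the step that makes $s$ even, and this step is initiated by an active vertex below $t'$.

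A blossom step absorbs only the tree path from $b(x)$ to $b(y)$, not the subtree of $t$; otherwise $s$ would already be even. So your sentence that $s$ ``joins that blossom through its base'' is precisely what needs proof. You must exclude that the step is based at an even node strictly between the blossom of $t$ and $s$. Your case (2) likewise rests on an informal ``DFS property''. That property alone does not rule out fresh calls on formerly odd vertices below $t'$; you need an argument that the first such vertex to become even drags $t$ along.

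The missing idea is to argue about the base $b(v)$ rather than about $v$. When $v$ scans the inducing edge, $b(v)$ is not completely scanned; this is the fact the paper's proof relies on.

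Blossom steps only merge consecutive segments of the tree path to $s$. Hence every even node on that path strictly below $t$ (if $t$ is still odd), or strictly below the blossom containing $t$ (if $t$ is already even), is a blossom based at a vertex that was even on the path $P$ from $t$ to $s$ at the reference time. All such vertices are completely scanned because $t'$ is. Therefore $b(v)$ is either a proper ancestor of $t$, or the blossom containing $t$ or an ancestor of it. In both cases $t$ ends up in the new blossom together with $s$.

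The paper organizes the same observation as an induction along $P$. The first odd vertex $u$ of $P$ to become even does so in a step whose base is not completely scanned, hence not on $P$, hence above $t$, so the segment from $t$ to $u$ is absorbed. One then recurses from the next odd vertex after $u$, whose mate is completely scanned. Your direct version, once repaired with this fact, avoids the induction.

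The appeal to Lemma~\ref{related or even-odd} is unnecessary. Since $s$ lies on the path from $b(v)$ to $b(w)$, $b(v)$ is already an ancestor of $s$.
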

\begin{proof} Let $P$ be the path from $t$ to $s$. We use induction on the length of $P$. All even nodes on $P$ are completely scanned. This holds since $t'$ is completely scanned. We will show that increasingly larger initial segments of $P$ are added to the blossom containing $t$. 

Among the odd vertices on $P$, let $u$ be the first to become even in a blossom step, $u = t$ is possible. If there are several vertices that become even in the same blossom step, take the deepest one. Let $u'$ be the mate of $u$. Let that blossom step be triggered by the edge $xy$, where $b(x)$ is an ancestor of $b(y)$. Then $u$ lies on the path from $b(x)$ to $b(y)$. Since $b(x)$ is not completely scanned, $b(x)$ does not lie on $P$ and hence is a proper ancestor of $t$. After the blossom step $t$, $u$ and $u'$ belong to the same blossom. 

If $u = s$, we are done. Otherwise, let $v$ be the odd vertex that follows $u$ on $P$, and let $v'$ be its mate. Then $v'$ is completely scanned, and hence we can apply the induction hypothesis to $v$ and $s$. After the blossom step that makes $s$ even, $v$ and $s$ belong to the same blossom. Since $v$ was odd before the blossom step, it belongs to the same blossom as $u$ after the blossom step, and hence to the same blossom as $t$. 
\end{proof}

\begin{figure}[t]
\begin{center} 
\begin{subfigure}{0.33\textwidth}
    \begin{center}
\begin{tikzpicture}[
  every node/.style={circle, draw, inner sep=2pt},
  squig/.style={decorate, decoration={zigzag, segment length=4, amplitude=0.9}, thick},
  scale=0.75, transform shape
]
\node (a) at (1,0) {};
\node (b) at (0,2) {};
\node (c) at (0,4) {};
\node (d) at (0,6) {$y_1$};
\node (e) at (0,8) {$y_2$};
\node (f) at (2,2){$y_3$};
\node (g) at (2,4) {};
\node(h) at (4,2) {};

\node (i) at (6,0) {};
\node (j) at (6,2) {$x_1$};
\node (k) at (6,4) {$x_2$};

\draw[solid,red] (a) -- (b);
\draw[solid] (a) -- (f);
\draw[squig,red] (b) -- (c);
\draw[solid] (c) -- (d);
\draw[squig] (d) -- (e);
\draw[solid,red] (c) -- (g);
\draw[solid,red] (f) -- (h);
\draw[squig,red] (f) -- (g);
\draw[solid] (i) -- (j);
\draw[squig] (j) -- (k);
\end{tikzpicture}
\end{center}
\end{subfigure} \hspace{0.2\textwidth}
\begin{subfigure}{0.33\textwidth}
   \begin{center}
\begin{tikzpicture}[
  every node/.style={circle, draw, inner sep=2pt},
  squig/.style={decorate, decoration={zigzag, segment length=4, amplitude=0.9}, thick},
  scale=0.75, transform shape
]
\node (x) at (0,0){$x$};
\node (t) at (0,1.5) {$t$}; 
\node (tp) at (0,3) {$t'$};
\node (u) at (0,4.5) {$u$};
\node (up) at (0,6) {$u'$};
\node (s) at (0,7.5){$s$};
\node (sp) at (0,9) {$s'$};
\draw[solid] (x) -- (t);
\draw[squig] (t) -- (tp); 
\draw[solid] (tp) -- (u);
\draw[squig] (u) -- (up);
\draw[solid] (up) -- (s); 
\draw[squig] (s) -- (sp);
\draw[solid] (x) to [out=45, in=315, looseness=1.5] (up);
\draw[solid] (u) to  [ out=45, in=315, looseness=1.5] (sp); 
\end{tikzpicture}
\end{center}
\end{subfigure}
\end{center}
\caption{\label{Illustration of Lemma} Illustration of Lemmata~\ref{related or even-odd} and~\ref{same blossom one}. The figure on the left illustrates Lemma~\ref{related or even-odd}. The current tree is on the right. The tree on the left is completed and contains an augmenting path shown in red. The nodes $y_1$ and $y_2$ are completely scanned; the other nodes in the left tree are only partially scanned. We may have edges $y_1 x_1$, $y_1 x_2$, $y_3 x_1$, $y_3 x_2$. None of them is added to the search structure. We cannot have either $y_2 x_1$ or $y_2 x_2$ since then, respectively, $x_2$ would be added to the first tree. \protect \\
The figure on the right illustrates Lemma~\ref{same blossom one}. The addition of the forward edge $xu'$ makes $u$ even; the addition of the edge $us'$ makes $s$ even and forms a blossom containing $t$ and $s$. }
\end{figure}

\begin{lemma}[Lemma A.4 in~\cite{Gabow:GeneralMatching}]\label{same blossom} Let $xy$ be an edge with two even endpoints. If both have been completely scanned, they belong to the same blossom.  \end{lemma}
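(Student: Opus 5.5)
We argue about the moment the edge $xy$ is scanned. Both $x$ and $y$ are completely scanned, so each has had $\findap$ called on it and has scanned all its non-matching edges. Say $x$ is scanned before $y$, or more precisely that $\findap(x)$ scans $xy$ no later than $\findap(y)$ does. When $\findap(y)$ scans $yx$, the node $x$ is even and lies in $S$. By Lemma~\ref{related or even-odd}, applied to the edge $yx$ at that moment, one of three things holds. Either $b(x)$ and $b(y)$ are related, or $x$ is odd or in $\CP$, or $y$ is in $\CP$. A completely scanned even node does not belong to $\CP$, since a call in $\CP$ was terminated prematurely. So the last two alternatives are excluded, and $b(x)$ and $b(y)$ are related in the search forest. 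The same holds with the roles of $x$ and $y$ swapped, at the time $x$ scans $xy$.

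Now split on which base is the ancestor. If $b(x)=b(y)$, we are done. Otherwise let $b(y)$ be a proper descendant of $b(x)$ at the time $\findap(x)$ scans $xy$. Then $b(y)$ is even and a proper descendant, so the blossom step fires. It merges all blossoms on the tree path from $b(x)$ to $b(y)$, and $x$ and $y$ land in the same blossom. Blossoms only grow, so they stay together.

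The remaining case is the one I expect to be the main obstacle. At the time $x$ scans $xy$, either $y$ is not yet even (it is odd or not yet in $S$), or $b(y)$ is a proper ancestor of $b(x)$.

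If $y\notin S$, then $y$ is not free, since otherwise an augmenting path would be completed and $x$ would enter $\CP$. So a growth step makes $y$ odd, and $y$ is odd from then on until it becomes even. If $y$ is odd when scanned from $x$, consider the later blossom step that makes $y$ even. Its triggering edge has its upper endpoint's base strictly above $y$. I would then apply Lemma~\ref{same blossom one}, with $t$ the odd ancestor of $x$ right below the lowest common blossom and $s=y$, or argue directly that the path from $y$ down to $b(x)$ is absorbed. At that blossom step, $b(x)$ is on the tree path from the base to $b(y)$, or is a descendant of $y$ whose odd nodes are then merged by later completely scanned steps.

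The symmetric case is that $b(y)$ is a proper ancestor of $b(x)$ when $x$ scans. Then I would use the symmetric argument: when $\findap(y)$ scans $yx$ (after it, or earlier as a descendant edge), the blossom step fires from $b(y)$, or $x$ was already merged.

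The delicate bookkeeping is the hypothesis of Lemma~\ref{same blossom one}: we need to locate an odd $t$ whose mate $t'$ is completely scanned and such that $y$ (or $x$) is an odd descendant of $t$. To get it, I would take $t$ to be the child of the deepest common even ancestor on the path to the odd endpoint. Its mate is even, lies on the path to a completely scanned node, and is therefore not in $\CP$ and completely scanned. Once $s$ becomes even, the lemma puts $t$ and $s$ together, and the edge $xy$ then ties in the other endpoint.
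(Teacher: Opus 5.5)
Your skeleton matches the paper's: $b(x)$ and $b(y)$ are related by Lemma~\ref{related or even-odd}, then you split cases at the moment $x$ scans $xy$, using a blossom step if $y$ is even and Lemma~\ref{same blossom one} if $y$ is odd. The even case is fine. So is the case where $b(y)$ is a proper ancestor of $b(x)$, which $y$'s later scan settles because ancestor relations persist.

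\textbf{The gap} is the case you flag as the main obstacle, $y$ odd or unlabeled when $x$ scans. You do not establish it, and the details you supply would not carry it. Your justification that $t'$ is completely scanned fails. Lying on the tree path to a node that ends up completely scanned does not keep $t'$ out of $\CP$: an augmenting path found later through another branch below $t'$ puts $t'$ into $\CP$, while an earlier branch stays completely scanned. Moreover, Lemma~\ref{same blossom one} needs $\findap(t')$ to have \emph{already returned} at that moment. Its proof uses that every even node between $t$ and $s$ is finished, so that the blossom steps making them even are triggered from above $t$.

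The correct reason is the recursion discipline. When $x$ scans $xy$, $\findap(x)$ is on top of the stack. The call $\findap(t')$ started when $t$ was grown from a node of $b(x)$. It cannot enclose $\findap(x)$, since otherwise $x$ would lie in or below the blossom of $t'$ rather than above the odd node $t$. Hence it has returned, and without finding a path, because this tree is still being searched. Your first choice of $t$, ``the odd ancestor of $x$'', is also wrong: $t$ must be the odd child of $b(x)$ on the way to $y$, as in your last paragraph and in the paper.

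Second, your closing step, ``the edge $xy$ then ties in the other endpoint'', uses the wrong mechanism. With $b(x)$ a proper ancestor of $b(y)$, scanning $yx$ from $y$ never triggers a blossom step, since these fire only toward even proper descendants. And $x$ has already scanned the edge. What ties $x$ in is that $t$ is an odd child of $b(x)$. A blossom step that makes $t$ even has $t$ as an interior odd node of its tree path, so $t$'s parent blossom, which contains $x$, is merged as well. This is the paper's ``since $t$ was odd, its parent and hence $b(x)$ also belong to the blossom''.

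The unlabeled case is the special case $t=y$. Here $y$ becomes an odd child of $x$'s blossom, and $\findap(\mate(y))$ is called at once and returns before $\findap(x)$ resumes. When $y$ turns even, it is merged with its parent.

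Finally, your claim that relatedness also holds ``with the roles swapped at the time $x$ scans $xy$'' is unjustified. At that time $y$ may be unlabeled, or odd and to the left of $x$ without being related to $b(x)$, and Lemma~\ref{related or even-odd} allows both.
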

\begin{proof} Since $x$ and $y$ are completely scanned, they do not belong to $\CP$. Once both nodes are even, $b(x)$ and $b(y)$ are related in the search structure (Lemma~\ref{related or even-odd}). Although, the bases of the blossoms containing $x$ and $y$ may change over time, the ancestor relation does not change. So let us assume that $b(x)$ is an ancestor of $b(y)$. Consider the moment, when $xy$ is scanned from $x$. 

If $y$ is not part of the search structure yet, it would be added as an odd child of $x$. At the time $y$ becomes even, $x$ and $y$ become members of the same blossom. 

So suppose that $y$ is already part of the search structure. Then it is either odd or even, when the edge $xy$ is scanned from $x$. If $y$ is even, a blossom step is executed, after which $x$ and $y$ belong to the same blossom. 

If $y$ is odd, let $t$ be the first odd node on the path from $b(x)$ to $y$, and let $t'$ be its mate. When $x$ scans $xy$, $t'$ is completely scanned. We apply Lemma~\ref{same blossom one} to $t$ and $y$, and conclude that, once $y$ is even, the two nodes belong to the same blossom. Since $t$ was odd, its parent and hence $b(x)$ also belong to the blossom. We have now shown that $x$ and $y$ belong to the same blossom. \end{proof}

\begin{lemma}[Lemma 8.1 in \cite{GT91} and page 10 in \cite{Gabow:GeneralMatching}] The algorithm determines a maximal set of augmenting paths.\end{lemma}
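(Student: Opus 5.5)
We argue by contradiction. Suppose that after $\findapset$ terminates there is an augmenting path $P$ in $H$ that is vertex-disjoint from all paths in $\CP$. Since augmenting paths in $H$ correspond to saps in $G$ (Lemma~\ref{H and G}), this is exactly the statement that the set found is not maximal.

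First, $P$ has two free endpoints $f$ and $g$, neither in $V(\CP)$. Every free vertex not in $V(\CP)$ becomes the root of a search tree in $\findapset$, so $f$ and $g$ are even. Neither of them lies on a path of $\CP$, so both are completely scanned by the lemma that even nodes outside $\CP$ are completely scanned.

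Second, we propagate complete scanning along $P$. Orient $P$ from $f$ and walk along it. We claim that every vertex of $P$ ends up in $S$, and that every vertex of $P$ reached by an even-length prefix lies in the same blossom as $f$ and is even and completely scanned. The inductive step uses the edge-based lemmas:
\begin{itemize}
\item A non-matching edge $xy$ of $P$ with $x$ even and completely scanned was scanned from $x$. At that moment, either $y$ was added to $S$ as a child of $x$, or $y$ was already in $S$. In the latter case Lemma~\ref{related or even-odd} restricts the possibilities. $y$ is not in $\CP$, and $x$ is not in $\CP$ either. So $b(x)$ and $b(y)$ are related, or $y$ is odd and to the left.
\item Here we would apply Lemma~\ref{same blossom one} and Lemma~\ref{same blossom}. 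Two completely scanned even endpoints of an edge lie in one blossom. An odd vertex whose even mate is completely scanned gets absorbed into the blossom of its ancestor.
\end{itemize}

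Third, we finish the argument. Since $g$ is even, completely scanned, and adjacent in the relevant sense to the structure of $f$, we reach a contradiction: $f$ and $g$ are distinct free vertices, hence roots of distinct trees (or $g$ would have been detected as free when first reached). Concretely, consider the first time $\findap$ scans an edge $xy$ of $P$ with $y$ not yet in $S$ and $y$ free. That call would have added an augmenting path to $\CP$. We need to show such a scan must occur, unless $P$ meets $\CP$.

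The main obstacle is the second step. $P$ can weave in and out of blossoms of the search structure and travel through odd vertices "backwards" along matching edges. So the naive induction "each next even vertex on $P$ is completely scanned" fails: an odd vertex on $P$ reached via a non-matching edge may have its mate already even and lying elsewhere. I would instead follow Gabow: consider the last tree $T$ (in construction order) that contains a vertex of $P$. Then I would argue with Lemma~\ref{related or even-odd} that edges of $P$ leaving $T$ to the left must end at odd vertices. A parity-counting argument along $P$, whose endpoints are both even and completely scanned, then forces some edge of $P$ to join two completely scanned even vertices in different blossoms, contradicting Lemma~\ref{same blossom}. I expect the careful bookkeeping of the even/odd alternation along $P$ relative to blossom boundaries to be the delicate part.
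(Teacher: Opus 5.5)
Your proposal does not yet contain a proof. The second step carries the whole argument, and it is left open. Moreover, the invariant you propose for it is false. It is not true that every even-indexed vertex of $P$ ends up even \emph{in the blossom of $f$}. If $f$ scans $x_1$ while it is unlabeled and $x_1$ stays odd, then $x_2=\mate(x_1)$ is a trivial blossom of its own. Lemma~\ref{related or even-odd} even permits $x_{2j+1}$ to be an odd vertex of an earlier tree. So what fails is this invariant, not the step-by-step induction along $P$. The detour via the last tree containing a vertex of $P$ and a parity count is unnecessary and, as written, unsubstantiated. Your third step only restates what has to be shown.

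The missing idea is to track where the base of the current blossom lies. Write $P=x_0,x_1,\ldots,x_{2k+1}$ and prove by induction on $j$ that $x_{2j}$ is even, $x_{2j+1}$ is matched, and $b(x_{2j})=x_h$ for some $h\le 2j$.
\begin{itemize}
\item Since $x_{2j}$ is even and not in $\CP$, it is completely scanned. So the edge $x_{2j}x_{2j+1}$ was scanned and $x_{2j+1}\in V(S)$.
\item If $x_{2j+1}$ is odd, its mate $x_{2j+2}$ is its even child and the base of its own blossom.
\item If $x_{2j+1}$ is even, it is completely scanned too. Lemma~\ref{same blossom} then places it in the blossom of $x_{2j}$, whose base is $x_h$ with $h\le 2j$. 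By simplicity of $P$, $x_{2j+1}$ is not that base. Hence its matching edge lies inside the blossom, and $x_{2j+2}$ is even with $b(x_{2j+2})=x_h$.
\end{itemize}
This is precisely what excludes the ``backward'' move along a matching edge from a blossom base to its odd parent that worried you.

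It remains to see that $x_{2j+1}$ is not free. If it had been free and outside $S$ when scanned, the path $P(x_{2j})x_{2j+1}$ would have been added to $\CP$, putting $x_{2j}$ into $\CP$. If it was already in $S$ and lies outside $\CP$, it is a root whose search failed earlier although the reversed path was available. The paper excludes this by letting $\findap(x_0)$ be the \emph{first} call that fails although an augmenting path from $x_0$ disjoint from $\CP$ exists. Hence $x_{2k+1}$ is matched, a contradiction.
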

\begin{proof} Assume otherwise. Consider the first time that a call $\findap(x_0)$ for a free node $x_0$ terminates unsuccessfully, although an augmenting path starting in $x_0$ disjoint from the current $\CP$ exists, say $x_0, x_1, \ldots, x_{2k + 1}$.  We show by induction that every $x_{2j}$ is even, $b(x_{2j}) = x_h$ for some $h \le 2j$, and $x_{2j+1}$ is matched. In particular, $x_{2k + 1}$ is matched, and hence no augmenting path starting in $x_0$ and disjoint from the current $\CP$ exists, a contradiction. 

Every even node $x \not\in \CP$, for which $\findap(x)$ is called, is completely scanned. This holds, since premature termination happens only if an augmenting path is found. 

$x_0$ is a free node not in $\CP$. So $\findap(x_0)$ is called and terminates unsuccessfully. If $x_1$ is free and does not belong to $V(S)$, $\findap(x_0)$ terminates successfully, a contradiction. If $x_1$ is free and already belongs to $V(S)$, it is either an endpoint of a path in $\CP$ (excluded since $x_1 \not\in CP$) or $\findap(x_1)$ was called earlier and terminated unsuccessfully (excluded since $\findap(x_0)$ is the first call that could have found an augmenting path, but did not). This leaves the case that $x_1$ is not free. So the base case $j = 0$ holds. 

We come to the induction step $j \rightarrow j + 1$. Since $x_{2j} \not\in \CP$, it has been completely scanned. Also, $x_{2j+1}$ is matched by the IH. Let $x_{2j + 2}$ be the mate of $x_{2j+1}$. So, $\findap(x_{2j + 2})$ is called and terminates unsuccessfully.  In this call, the edge to $x_{2j + 3}$ is inspected. Assume first that $x_{2j + 3}$ is free. If it does not belong to $V(S)$ yet, an augmenting path is found, a contradiction. If it already belongs to $V(S)$ and since it does not belong to $\CP$, $\findap(x_{2j + 3})$ terminated unsuccessfully before $\findap(x_0)$ being called, a contradiction to our assumption that $\findap(x_0)$ is the first such call. Thus, $x_{2j + 3}$ is not free. 

We now distinguish cases, according to whether $x_{2j + 1}$ is odd or even at termination of the call $\findap(x_0)$. 

If $x_{2j + 1}$ is odd, $x_{2j+1}$ is a singleton blossom, and $x_{2j + 2}$ is matched to a vertex outside the blossom containing it. So, $x_{2j + 2}$ is the base of the blossom. Hence $b(x_{2j + 2}) = x_{2j + 2}$, and the induction step is completed.

If $x_{2j + 1}$ is even, it was completely scanned and hence belongs to the same blossom as $x_{2j}$ by Lemma~\ref{same blossom}. Thus $b(x_{2j+1}) = b(x_{2j})$ and hence $b(x_{2j + 1})= x_h$ for some $h \le 2j$ by induction hypothesis. In particular, $b(x_{2j + 1}) \not= x_{2j + 1}$. So $x_{2j+1}$ is not the base of a blossom, and hence the matching edge connecting it to $x_{2j + 2}$ connects two nodes in the same blossom. Thus $x_{2j + 2}$ is even and $b(x_{2j + 2}) = b(x_{2j + 1}) = x_h$.  \end{proof}

We have now described how to find a maximal number of edge-disjoint augmenting paths in $H$. It remains to lift these paths to $G$ by filling in the parts inside maximal blossoms. This is standard. Suppose, we need to find the path from a vertex $z$ in a blossom to the base $b$ of the blossom. If $z$ was born even, we can walk straight down. If $z$ was born odd, we need to go through the bridge of the blossom. For this purpose, we store with each node of a blossom a pointer to the bridge, say $xy$, and also which node of the bridge is on the $z$-side of the bridge, say $x$. Then we construct the path inside the blossom by walking down from $x$ to $z$ and from $y$ to $b$. We collect the non-matching edges of the path in a set. The non-matching edges of the path are returned as a list of edges. Of course, if $z$ was born odd, the bridge is also added to this set. 

Once we have lifted all augmenting paths to $G$, augmenting them is easy. We simply mate the endpoints of each non-matching edge in an augmenting path. This automatically breaks the old partnerships. \smallskip

\textbf{Remark:} In the implementation, we never construct $H$ explicitly, but use $G$ instead. The bases of the maximal blossoms represent the nodes of $H$.

\section{Running Time Analysis}

The algorithm works iteratively. In each iteration, it augments a maximal number of edge-disjoint augmenting paths. Therefore, the number of iterations is $O(\sqrt{n})$~\cite{Hopcroft-Karp}. Moreover, it is well-known how to make each iteration run in time $O(m \alpha(n))$ or even $O(m)$~\cite{Gabow-Tarjan:union-find, GT91,Gabow:GeneralMatching}.  We give a short account for completeness. Part I explores the graph breadth-first; Part III explores it depth-first. The non-trivial actions in Part I are the discovery and the administration of blossoms; in Part III, it is only the administration of blossoms.  In part I, whenever a bridge $uv$ is discovered, one walks \emph{in lock-step fashion} from $u$ and $v$ towards the roots of the search structures containing them until either the same node is encountered (the base of the newly formed blossom) on both paths or distinct roots are reached. In the first case, a new blossom has been found in time proportional to the size of the blossom and hence in time proportional to the size reduction of the graph, and in the second case, an augmenting path has been found. Thus, the time for finding blossoms is $O(n)$. A union-find data structure is used for maintaining blossoms; the cost is $O(m \alpha(n))$ or $O(m)$ depending on the sophistication of the realization.

\section{Implementation and Experimentation}\label{Implementation}
The authors have implemented the algorithm in $\CC$. The source code is available on the companion webpage: {\scriptsize \url{https://people.mpi-inf.mpg.de/~mehlhorn/CompanionPageGenMatchingImplementation.html}}. The companion webpage also contains an implementation of Gabow's algorithm. We ran both implementations on many examples. The two implementations exhibit about the same running times. The paper~\cite{GabowImplementation} contains detailed data for Gabow's algorithm. 

Table~\ref{instruction counts} shows the results of some experiments. We used random graphs and carefully constructed graphs. The latter graphs come in two kinds: short chains only or short and long chains. They consist of a complete graph with about $\sqrt{n}$ vertices, $O(n)$ short chains of length seven all attached to a fixed vertex $z$ of the complete graph, and, in the case of short and long chains, one chain of length $2i + 1$ each attached to $z$, where $4 \le i \le \sqrt{n}$. We refer to~\cite{GabowImplementation} for more details. On the graphs with short and long chains, the algorithm executes $O(\sqrt{n})$ iterations; on the graphs with only short chains, the algorithm executes a constant number of iterations; and on random graphs, the algorithm executes a non-linear number of iterations. In random graphs with sufficient edge density ($m/n \ge 400$), non-maximum matchings have logarithmic length augmenting paths with high probability~\cite{MatchingSparseRandomGraphs}.

\begin{table}[ht]
\begin{center}
\begin{tabular}{||r|r||r|r|r||r|r||r|r||}\hline
& & \multicolumn{3}{c||}{SHORT AND LONG} & \multicolumn{2}{c||}{SHORT} & \multicolumn{2}{c||}{RANDOM} \\ \hline
$\sfrac{n}{10^3}$ &  $\sfrac{m}{10^3}$   & \#it &  $\sfrac{I(n)}{10^6}$ & $\sfrac{I(n)}{I(n/2)}$ & $\sfrac{I(n)}{10^6}$ & $\sfrac{I(n)}{I(n/2)}$  & $\sfrac{I(n)}{10^6}$ & $\sfrac{I(n)}{n \ln n}$ \\ \hline
10 & 22 & 24 & 91.8 & & & & & \\ \hline
20& 56 & 33 & 264.6 & 2.88  & & & 80.1 & 4.0\\ \hline
40 & 114  & 47  & 744.4 & 2.81  & 7.5 &  &  174.0 & 4.1\\ \hline
80 & 232& & & & 15.1& 2.0 &  426.7 & 4.73 \\ \hline
160 & 471 & & & & 30.1 & 2.0 & &   \\  \hline
\end{tabular}\end{center}
\caption{\textbf{Instruction Counts, Short and Long Chains, Short Chains, Random Graphs:}\label{instruction counts} On graphs with short and long chains, the algorithm executes $O(\sqrt{n})$ iterations. This is confirmed by the third column; the number of iterations doubles when $n$ is quadrupled. The fourth column shows the counts for the number of accesses to edge records, an instruction that is representative for the execution of the algorithm. The instruction counts grow as $n^{3/2}$; note that $2^{3/2} \approx 2.83$. \\ \protect
On graphs with only short chains, the number of iterations is constant, and the instruction count grows linearly. \\ \protect
On random graphs, the number of iterations grows superlinearly. It is known~\cite{MatchingSparseRandomGraphs} that for $m/n \ge 400$, non-maximum matchings in random graphs have an augmenting path of length $O(\log n)$ with high probability. \\ \protect
The programs were compiled with the flag -pg, and the profiler gprof was used for determining instruction counts. 
}\end{table}

\section{The Connection to Gabow's Algorithm}\label{Connection to Gabow}

Part I of Gabow's algorithm determines the length of the shortest augmenting paths, if any, and collects enough information for the construction of an auxiliary graph with the property that its augmenting paths are in one-to-one correspondence with the $\saps$. The algorithm is a primal-dual algorithm that is based on Edmonds' algorithm for maximum weight matching. 

Let $M$ be the current matching. Each edge of $G$ is given a weight: weight two for $e \in M$ and weight zero for $e \not\in M$. The weight of an augmenting path $p$ is defined as 
\[      w(P) = w(P \setminus M) - w(P \cap M),\]
i.e., as the increase in weight obtained by augmenting $p$ to $M$. The increase is the negative of the number of edges in $P \cap M$ and is maximum if $\abs{P \cap M}$ is smallest, i.e., if $p$ is a $\sap$.  Edmonds' algorithm is based on the linear programming formulation of maximum weighted matching and makes use of linear programming duality. The dual linear program has a variable for each vertex of $G$ and for each set of vertices of odd cardinality three or more. The latter are non-negative; the former are unconstrained. The \emph{reduced weight} $\hat{w}(e)$ of an edge $e = xy$ is defined as
\[ \hat{w}(e) = d(x) + d(y) + \sum_{x,y \in B} z(B) - w(e), \]
where $d$ denotes dual values of vertices and $z$ denotes dual values of odd sets (of cardinality three or more). The sum is over all odd sets containing $x$ and $y$. \emph{Reduced weights are always non-negative, and an edge is called tight if its reduced weight is zero}.
When the search for an augmenting path starts, $d(v) = 1$ for all $v$ and $z(B) = 0$ for all odd sets $B$. Then matching edges are tight and non-matching edges are non-tight. The search for an augmenting path grows search structures (usually called trees) rooted at free nodes. The trees are grown concurrently, and the endpoints of a matching edge either both belong to a tree or none belongs to a tree. The trees are initialized with the free nodes. A tree node is even, if there is an even length path in the search structure connecting it to its root, and odd otherwise. The incoming tree-edges of even nodes are matching edges and are non-matching for odd nodes. Trees are grown by the addition of \emph{tight} edges incident to even nodes. So assume that $x$ is an even node and let $xy$ be a tight edge incident to $x$. If $y$ does not belong to any search structure yet (then $y$ is matched), we add $y$ and its mate to the tree containing $x$; $y$ becomes odd, and its mate becomes even. When $y$ already belongs to a search structure and is odd, we do nothing, as we have simply discovered another odd length path to $y$. If $y$ is even and belongs to a different tree, we have discovered an augmenting path. If $y$ is even and belongs to the same tree, we have discovered a so-called blossom. Let $b$ be the lowest common ancestor of $x$ and $y$. Then all odd nodes on the paths from $b$ to $x$ and $y$ become even. For example, an odd node $z$ between $b$ and $y$ can now be reached by going from the root to $x$ (even length), then to $y$ (one step), and then from $y$ towards $z$ (odd length). The edge $xy$ is called the \emph{bridge} of the blossom, and the blossom consists of all nodes on the paths from $b$ to $x$ and $y$, respectively. A blossom contains an odd number of nodes. We contract all nodes of the blossom into a single node. In this way, the search structures stay trees. Note that blossoms can be nested.

In what order are edges added to the search structures? We maintain the invariant that all edges in the search structures are tight and that all roots have the same dual value. We grow the trees by adding non-matching tight edges, i.e., edges $e = xy$ with $\hat{w}(e) = 0$. Then, necessarily, $d(x) + d(y) = 0$ as $w(e) = 0$. In particular, if $y$ does not belong to a search structure yet, $d(x) = -1$ and $d(y) = +1$. Suppose now that we cannot further grow trees, i.e., $\hat{w}(e) > 0$ for any edge $xy$ with at least one even endpoint. Then we perform a dual update. We decrease $d(v)$ by one for every even vertex, increase $d(v)$ by one for every odd vertex, and increase $z(B)$ by two for every maximal blossom, i.e., any blossom not contained in any other blossom. Then, $z$-values are always even, and the $d$-values of all vertices in the search structures have the same parity. Moreover, all edges in the search structures stay tight: if one endpoint is even and one is odd, this is obvious; if both endpoints are even, the endpoints are contained in the same maximal blossom, and again the claim is obvious. Consider now an edge $xy$ not belonging to any search structure. If no endpoint belongs to the search structures, $d(x) = d(y) = 1$ and the reduced weight does not change. If at least one endpoint is odd, the reduced weight does not decrease. If one endpoint is even, and the other endpoint does not belong to a search structure, the reduced weight decreases by one and hence stays non-negative. If both endpoints are even and do not belong to the same blossom, the reduced weight decreases by two and hence stays non-negative since the $d$-values of the endpoints have the same parity. In this way, reduced weights stay non-negative, edges in the search structures stay tight, and all roots have the same dual value. After $\Delta$ dual updates, we have $d(f) = 1 - \Delta$ for any free node $f$.
Once a tight edge having endpoints in different trees is added to the search structure, Part I ends and a sap is found. Its length is $2 \Delta - 1$. 

The connection to our version of part one is simple. At all times:
\begin{align*} 
   \lcp(v) &= d(v) - d(f) = d(v) + \Delta - 1   && \text{if $v$ is even,} \\
   \lcpodd(v) &= -d(v) - d(f) + 1 = -d(v) + \Delta &&\text{if $v$ is odd,}
   \end{align*}
   where $f$ is any free node. Gabow states in his paper, that the lengths of the canonical paths satisfy these equations. We first rewrote the algorithm using $\lcp$-values and $\lcpodd$-values instead of the dual values $d$ and then proved correctness without any reference to linear programming duality. We stress that the algorithms are isomorphic in the sense that they can perform exactly the same sequence of growth- and bridge-steps. However, we feel that our presentation is more direct.

\end{document}